\newif\ifdraft\draftfalse   
\newif\ifappendix\appendixtrue   
\renewcommand\footnotetextcopyrightpermission[1]{}
\setlist{leftmargin=14pt}
\newcommand{\inliststyle}[1]{\textnormal{\textbf{\small#1}}}
\newlist{inlist}{enumerate*}{1}
\setlist[inlist]{label={\inliststyle{(\arabic*)}}}
\newcommand*{\noaccsupp}[1]{\BeginAccSupp{ActualText={}}#1\EndAccSupp{}}
\crefname{algorithm}{Algorithm}{Algorithms}
\Crefname{algorithm}{Algorithm}{Algorithms}
\crefname{conjecture}{Conjecture}{Conjectures}
\Crefname{conjecture}{Conjecture}{Conjectures}
\crefname{definition}{Definition}{Definitions}
\Crefname{definition}{Definition}{Definitions}
\crefname{equation}{Equation}{Equations}
\Crefname{equation}{Equation}{Equations}
\crefname{figure}{Figure}{Figures}
\Crefname{figure}{Figure}{Figures}
\crefname{lemma}{Lemma}{Lemmas}
\Crefname{lemma}{Lemma}{Lemmas}
\crefname{section}{\S\!}{\S\S\!}
\Crefname{section}{\S\!}{\S\S\!}
\crefname{theorem}{Theorem}{Theorems}
\Crefname{theorem}{Theorem}{Theorems}
\definecolor{dkblue}{rgb}{0,0.1,0.5}
\definecolor{dkcyan}{rgb}{0.1, 0.3, 0.3}
\definecolor{dkgreen}{rgb}{0,0.3,0}
\definecolor{dkred}{rgb}{0.6,0,0}
\definecolor{dkpurple}{rgb}{0.7,0,0.4}
\definecolor{olive}{rgb}{0.4, 0.4, 0.0}
\definecolor{orange}{rgb}{0.9,0.6,0.2}
\definecolor{lightyellow}{RGB}{255, 255, 179}
\definecolor{lightgreen}{RGB}{170, 255, 220}
\definecolor{teal}{RGB}{141,211,199}
\definecolor{darkbrown}{RGB}{121,37,0}
\definecolor{princetonorange}{RGB}{255,143,0}
\definecolor{tmlblue}{RGB}{0,58,120}       
\newtcolorbox{simplebox}
             [1][]
             {arc=2pt,
              boxrule=0.75pt,
              boxsep=0.5pt,
              colback=white,
              left=8pt,
              right=8pt,
              top=0pt,
              bottom=5pt,
              #1,
             }
\newtcolorbox{inlinebox}
             [3][]
             {arc=0.5pt,
              boxrule=0.25pt,
              boxsep=0.5pt,
              colback=white,
              left=5pt,
              right=8pt,
              top=4pt,
              bottom=3pt,
              width=#2\linewidth,
              after skip=0.25em,
              valign=center,
              halign=center,
              #1,
             }
\let\oldimplies\implies
\renewcommand\implies{\mathrel{\resizebox{1.5em}{0.5em}{$\oldimplies$}}}
\let\oldiff\iff
\renewcommand\iff{\mathrel{\resizebox{1.75em}{0.5em}{$\oldiff$}}}
\newcommand{\fancyname}[1]{\textcolor{dkblue}{\textsf{\textsc{#1}}}}
\newcommand{\CF}[1]{\text{\textnormal{\lstinline!#1!}}}
\newcommand{\VARIADIC}[6]{%
  \expandafter\newcommand\csname GobbleNext#1Arg\endcsname[2]{%
    \csname CheckNext#1Arg\endcsname{##1#4##2}%
  }%
  \expandafter\newcommand\csname CheckNext#1Arg\endcsname[1]{%
    \csname @ifnextchar\endcsname\bgroup{\csname GobbleNext#1Arg\endcsname{##1}}{#2{##1#5}#6}%
  }%
  \expandafter\newcommand\csname #1\endcsname[2]{%
    \csname CheckNext#1Arg\endcsname{#3##1#4##2}%
  }%
}
\newcommand{\SYMBOL}[1]{\ensuremath{#1}}
\newcommand{\module}{\SYMBOL{\mathbf{M}}}
\newcommand{\interface}{\SYMBOL{\mathbf{F}}}
\newcommand{\spec}{\SYMBOL{\varphi}}
\newcommand{\inv}{\SYMBOL{\mathcal{I}}}
\newcommand{\verifier}{\SYMBOL{\textnormal{\textbf{\textsf{Verify}}}}}
\newcommand{\synthesizer}{\SYMBOL{\textnormal{\textbf{\textsf{Synth}}}}}
\newcommand{\Name}{\textbf{Name}}
\newcommand{\Time}{\textbf{Time}}
\newcommand{\TST}{\textbf{TST}}
\newcommand{\TVT}{\textbf{TVT}}
\newcommand{\TSC}{\textbf{TSC}}
\newcommand{\TVC}{\textbf{TVC}}
\newcommand{\MST}{\textbf{MST}}
\newcommand{\MVT}{\textbf{MVT}}
\newcommand{\Size}{\textbf{Size}}
\newcommand{\TOOL}[1]{\textnormal{\textproc{#1}}}
\newcommand{\Deryaft}{\TOOL{Deryaft}}
\newcommand{\Eldarica}{\TOOL{Eldarica}}
\newcommand{\Hanoi}{\TOOL{Hanoi}}
\newcommand{\Leon}{\TOOL{Leon}}
\newcommand{\LoopInvGen}{\TOOL{LoopInvGen}}
\newcommand{\Myth}{\TOOL{Myth}}
\newcommand{\RepInvGen}{\TOOL{Hanoi}}
\newcommand{\LinearArbitrary}{\TOOL{LinearArbitrary}}
\newcommand{\LA}{\TOOL{LA}}
\newcommand{\ALGO}[1]{\textnormal{\textproc{#1}}}
\newcommand{\IsCondInductive}{\ALGO{CondInductive}}
\newcommand{\ClosedPositives}{\ALGO{ClosedPositives}}
\newcommand{\NoNegatives}{\ALGO{NoNegatives}}
\newcommand{\VALUE}[1]{\textnormal{\texttt{#1}}}
\newcommand{\Valid}{\VALUE{Valid}}
\newcommand{\CounterExample}[1]{\VALUE{C\kern0.0625pcEx}\,#1}
\newcommand{\Failure}{\VALUE{Failure}}
\newcommand{\Success}[1]{\VALUE{Success}\;#1}
\newcommand{\OPERATOR}[1]{\ensuremath{#1}}
\newcommand{\definedas}{\OPERATOR{\mathbin{\triangleq}}}
\newcommand{\card}[1]{\OPERATOR{\mathlarger\lvert#1\mathlarger\rvert}}
\newcommand{\rank}{\OPERATOR{\mathcal{R}}}
\newcommand{\Subst}[3]{\OPERATOR{#1[#2 \mapsto #3]}}
\newcommand{\Exists}[2]{\OPERATOR{\exists \mkern2mu #1 \mkern2mu \ldotp #2}}
\newcommand{\Forall}[2]{\OPERATOR{\forall \mkern2mu #1 \mkern2mu \ldotp #2}}
\NewDocumentCommand
  {\Sufficient}
  { O{\mkern-2mu} O{} m }
  {\OPERATOR{\textnormal{\textsf{Suf}}^{\mkern3mu{#1}}_{#2}[#3]}}
\NewDocumentCommand
  {\Constructible}
  { O{} m m }
  {\OPERATOR{\bm{\mathfrak{C}}_{\mkern1mu#1}\left[#2\mkern3mu\bm{;}#3\right]}}
\newcommand{\Satisfies}[4]{\OPERATOR{\HasType{#1}{#2} \models_{\raisebox{-0.125em}{\smaller[3]\ensuremath{#3}}} #4}}
\newcommand{\EvaluatesTo}[2]{\OPERATOR{#1 \Downarrow #2}}
\newcommand{\CollectV}[2]{\OPERATOR{\left\{\mkern-3.875mu\middle|\,#2\,\middle|\mkern-3.875mu\right\}_{#1}}}
\newcommand{\LeadsTo}[5]{\OPERATOR{{#3} \bm{:} {#4} \mkern8mu\resizebox{0.9em}{0.6em}{$\blacktriangleright$}^{#1}_{#2} \mkern8mu#5}}
\newcommand{\LeadsToT}[6]{\LeadsTo{#1}{#2}{#3}{#4}{\CounterExample{\Tuple{#5}{#6}}}}
\newcommand{\EXPR}[1]{\ensuremath{#1}}
\newcommand{\LambdaF}[3]{\EXPR{(\lambda \HasType{#1}{#2} \ldotp #3)}}
\newcommand{\Proj}[2]{\Apply{\pi_{#1}}{#2}}
\newcommand{\W}{\EXPR{w}}
\newcommand{\expr}{\EXPR{e}}
\newcommand{\mval}{\EXPR{v_m}}
\newcommand{\mstd}{\Tuple{\ctyp}{\mval}}
\newcommand{\Vpos}{\EXPR{V_{\mkern-2mu+}}}
\newcommand{\Vneg}{\EXPR{V_{\mkern-2mu-}}}
\newcommand{\TYPE}[1]{\ensuremath{#1}}
\newcommand{\TYPENAME}[1]{\TYPE{\textnormal{\texttt{#1}}}}
\newcommand{\HasType}[2]{\TYPE{#1 \bm{:} #2}}
\NewDocumentCommand
  {\HasTypeInCtx}
  { O{} m m }
  {\TYPE{#1 \vdash \HasType{#2}{#3}}}
\newcommand{\TBool}{\TYPENAME{bool}}
\newcommand{\TInt}{\TYPENAME{int}}
\newcommand{\TList}[1]{\TYPENAME{#1\;list}}
\newcommand{\ctyp}{\TYPE{\tau_c}}
\newcommand{\mtyp}{\TYPE{\tau_m}}
\newcommand{\istd}{\Exists{\alpha}{\mtyp}}
\newcommand{\HanoiNoSRC}{\Hanoi{}\textsubscript{-SRC}}
\newcommand{\HanoiNoCLC}{\Hanoi{}\textsubscript{-CLC}}
\newcommand{\OneShot}{\TOOL{OneShot}}
\newcommand{\AndStrengthen}{\TOOL{$\wedge$Str}}
\newcommand{\NumBenchmarks}{28}
\newcommand{\NumSuccess}{22}
\algnewcommand\ALGOrithmicswitch{\textbf{switch}}
\algnewcommand\ALGOrithmicmatch{\textbf{match}}
\algnewcommand\ALGOrithmiccase{\textbf{case}}
\algnewcommand\ALGOrithmicwith{\textbf{with}}
\newlength{\algorithmicindentlength}
\algrenewcommand\algorithmicindent{\algorithmicindentlength}
\algnewcommand{\IfThen}[2]{
  \State \algorithmicif\ #1\ \algorithmicthen\ #2}
\algnewcommand{\IfThenElse}[3]{
  \State \algorithmicif\ #1\ \algorithmicthen\ #2\ \algorithmicelse\ #3}
\algrenewcommand\alglinenumber[1]{\smaller \textcolor{darkgray}{\texttt{#1}}\hspace{0.5em}}
\algnewcommand{\LineComment}[2][1]{\State {\smaller\hspace{\dimexpr \algorithmicindentlength * #1 - 2.25em \relax}\textbf{\(\blacktriangleright\)\: #2}}}
\algrenewcommand{\algorithmiccomment}[1]{\hfill {\smaller \textbf{\(\blacktriangleright\)\: #1}}}
\algrenewcommand{\Return}[1]{\State \algorithmicreturn\ #1}
\algnewcommand\algorithmicbreak{\textbf{break}}
\algnewcommand{\Break}{\State \algorithmicbreak}
\algnewcommand\algorithmicthrow{\textbf{throw}}
\algnewcommand{\Throw}[1]{\State \algorithmicthrow\ #1}
\algnewcommand\algorithmicassume{\textbf{assume}}
\algnewcommand\Assume[1]{\State\algorithmicassume\ #1}
\algnewcommand\algorithmicassert{\textbf{assert}}
\algnewcommand\Assert[1]{\State\algorithmicassert\ #1}
\algnewcommand\algorithmicfunc{\textbf{func}}
\algnewcommand\algorithmicendfunc{\textbf{end\ func}}
\algnewcommand\algorithmicparfor{\textbf{parallel\,for}}
\algnewcommand\algorithmicpardo{\textbf{do}}
\algnewcommand\algorithmicendparfor{\textbf{end\ parallel\,for}}
\algnewcommand\algorithmicforeach{\textbf{for\,each}}
\algnewcommand\algorithmicforeachdo{\textbf{do}}
\algnewcommand\algorithmicendforeach{\textbf{end\ for\,each}}
\colorlet{listing-comment}{gray}
\colorlet{operator-symbol}{darkbrown}
\lstdefinelanguage{OCaml}{
    language=Caml,
    morekeywords={include, module, sig, struct, val},
    morekeywords=[2]{false, true},
    keywordstyle=[2]\color{dkgreen},
    morekeywords=[3]{int, bool, list},
    keywordstyle=[3]\color{dkcyan},
    literate=%
      {=}{{{\color{operator-symbol}=}}}1
      {<}{{{\color{operator-symbol}<}}}1
      {>}{{{\color{operator-symbol}>}}}1
      {:}{{{\color{operator-symbol}:}}}1
      {;}{{{\color{operator-symbol};}}}1
      {|}{{{\color{operator-symbol}|}}}1
      {[}{{{\color{operator-symbol}[}}}1
      {]}{{{\color{operator-symbol}]}}}1
      {\&}{{{\color{operator-symbol}\&}}}1
      {->}{{{\color{operator-symbol}->}}}1
}
\lstdefinestyle{default}{
    basicstyle=\linespread{0.9}\ttfamily,
    columns=fullflexible,
    commentstyle=\sffamily\color{black!50!white},
    escapechar=\#,
    framexleftmargin=1ex,
    framexrightmargin=1ex,
    keepspaces=true,
    keywordstyle=\color{dkblue},
    mathescape,
    numbers=left,
    numberblanklines=false,
    numbersep=1.25em,
    numberstyle=\relscale{0.8}\color{gray}\ttfamily\noaccsupp,
    showstringspaces=true,
    stepnumber=1,
    xleftmargin=2.5em,
}
\small\lstset{#1}}
\newcommand{\customlabel}[2]{%
   \protected@write \@auxout {}{\string \newlabel {#1}{{#2}{\thepage}{#2}{#1}{}} }%
   \hypertarget{#1}{#2}
}
\begin{document}

\title[Data-Driven Inference of Representation Invariants]
      {Data-Driven Inference of \\ Representation Invariants}

\begin{abstract}
  A \emph{representation invariant} is a property that holds of all values of
  abstract type produced by a module. Representation invariants play important
  roles in software engineering and program verification. In this paper, we
  develop a counterexample-driven algorithm for inferring a representation
  invariant that is sufficient to imply a desired specification for a module.
  The key novelty is a type-directed notion of \emph{visible inductiveness},
  which ensures that the algorithm makes progress toward its goal as it
  alternates between weakening and strengthening candidate invariants. The
  algorithm is parameterized by an example-based synthesis engine and a
  verifier, and we prove that it is sound and complete for first-order modules
  over finite types, assuming that the synthesizer and verifier are as well. We
  implement these ideas in a tool called \RepInvGen, which synthesizes
  representation invariants for recursive data types. \RepInvGen\ not only
  handles invariants for first-order code, but higher-order code as well. In its
  back end, \RepInvGen\ uses an enumerative synthesizer called \Myth\ and an enumerative testing tool as a verifier. Because \RepInvGen\ uses testing for
  verification, it is not sound, though our empirical evaluation shows that it is
  successful on the benchmarks we investigated.
\end{abstract}

\author{Anders Miltner}
\affiliation{
  \institution{Princeton University}
}
\email{amiltner@cs.princeton.edu}

\author{Saswat Padhi}
\affiliation{
  \institution{UCLA}
}
\email{padhi@cs.ucla.edu}

\author{Todd Millstein}
\affiliation{
  \institution{UCLA}
}
\email{todd@cs.ucla.edu}

\author{David Walker}
\affiliation{
  \institution{Princeton University}
}
\email{dpw@cs.princeton.edu}

%
%

%

\maketitle

\section{Introduction}%
\label{sec:introduction}

A \emph{representation invariant} is a property that holds of all
values of abstract type produced by a module.  For instance, a module
that implements a set using a list might maintain a \emph{no
  duplicates}
or \emph{is sorted} invariant over the lists.  Module implementers can rely on the invariant for correctness and efficiency and must ensure that it is maintained by each function in the module.  Making representation invariants explicit has a number of software engineering benefits:  they can be used as documentation, dynamically checked as {\em contracts}~\cite{Meyer97, Findler02}, and used for automated testing~\cite{Claessen00, Boya02}.

Representation invariants also play a key role in modular verification
of software components.  Consider a module that implements sets; its
specification \spec\ might demand that \CF{(lookup (insert s i) i)} return \CF{true} for all sets \CF{s} and items \CF{i}.  A standard way to prove such a specification~\cite{web18/appel/vfa} is in two steps:  1) prove that a predicate \inv\ is a representation invariant of the module; and 2) prove that \inv\ is {\em stronger} than \spec, i.e., all module states that satisfy \inv\ also satisfy \spec.  In other words, modular verification can be reduced to the problem of synthesizing a \emph{sufficient representation invariant}.

In this paper, we develop an approach to automatically infer a
sufficient representation invariant given a pure, functional module and a specification.   To our knowledge, the only prior work to tackle this problem~\cite{tacas07/malik/deryaft} builds candidate invariants out of a fixed set of atomic predicates and provides no correctness guarantees.  We address both of these limitations through a form of \emph{counterexample-guided inductive
synthesis} (CEGIS)~\citep[\S 5]{sttt13/solar-lezama/sketching}, which consists of an interaction between two black-box components:
\begin{inlist}
  \item a \emph{synthesizer} that generates a candidate invariant
        consistent with given sets of positive and negative examples,
  \item a \emph{verifier} that either proves that a candidate is a sufficient representation invariant
        or produces a counterexample, which becomes a new example for the synthesizer.
\end{inlist}

Our approach is inspired by recent work in data-driven inference of inductive invariants in other settings~\citep{pldi16/padhi/loopinvgen,fmsd16/sharma/c2i,
oopsla18/ezudheen/hornice,pldi18/zhu/data-driven-chc}.  As in that work, a key
challenge is how to handle {\em inductiveness counterexamples}, pairs of module states
\Tuple{s}{s'}\ such that $s$ satisfies the candidate invariant, some module operation transforms state $s$ to state $s'$, but $s'$ does not satisfy the candidate invariant.
The problem is that there are two ways to resolve such counterexamples and it is
not clear which is correct:  treat $s$ as a new negative example or treat $s'$
as a new positive example.

We observe that if $s$ is a {\em constructible} state of the module, meaning that it is reachable by a sequence of module operations, then $s'$ must be as well.  Therefore, any representation invariant will include both states, so we must treat $s'$ as a new positive example.
Based on this observation, we define a candidate invariant $I$ to be
\emph{visibly inductive} on a module relative to a set $S$ of known constructible
states if every module operation produces a
state satisfying $I$ when invoked from a state in $S$. For each candidate invariant, we first iteratively weaken
it until it is visibly inductive relative to the current set of known constructible states, in
the process adding new states to this set, and only then do we
consider other inductiveness violations. Intuitively, this approach eagerly
identifies and exploits inductiveness counterexamples for which no ``guessing''
is required.

We have formalized a general notion of inductiveness as a type-indexed logical
relation, of which both our notion of visible inductiveness and the traditional notion of (full) inductiveness are
special cases. We have also formalized our overall algorithm using this notion.
We have proven the algorithm sound and complete when the module contains
first-order code and the implementation of the abstract type is a finite domain,
provided the underlying synthesizer and verifier are also sound and complete.

We have implemented our algorithm in OCaml and call the resulting tool \RepInvGen.  To instantiate the synthesis
component of the system, we use \Myth~\cite{pldi15/osera/myth}, a type-
and example-directed synthesis engine.  \Myth\ is capable of
synthesizing invariants over recursive data types in many cases, so it
is a good fit for tackling proofs about modules that implement recursive data types,
which are the focus of our benchmarks.  To instantiate the
verification component, we use a form of enumerative test generation.
Despite the unsoundness of this
underlying verifier, our experimental results show that \RepInvGen\
still infers sufficient representation invariants in practice.  Such
{\em likely} representation invariants can be used by module
implementers and verifiers for many purposes.

We have also implemented extensions that allow \RepInvGen\ to be used
with higher-order code.  Here, the main challenge comes in how to
extract counterexamples from higher-order arguments.  It turns out that our
first-order scheme for extracting counterexamples is essentially an
application of a first-order contract that guards and logs values
passing through the first-order interface.  The solution to
counterexample-extraction from higher-order code then is to implement
higher-order contracts~\cite{Findler02} that guard and log values across this
higher-order interface.

To evaluate our tools, we constructed a benchmark suite that includes
\NumBenchmarks\ different modules, including a variety of modules over lists and
trees, many drawn from Coq libraries and books~\cite{web18/appel/vfa}.
We find that \RepInvGen\ is able to
synthesize \NumSuccess\ of these invariants within 30 minutes.

To summarize, the main contributions of this work are:
\begin{itemize}
\item An algorithm for automated synthesis of representation invariants,
  parameterized by a verifier and synthesizer.
\item A formalization of the algorithm and the key notion of visible
  inductiveness, over a first-order type theory. We prove soundness and completeness in the
  case of finite domains, if the given verifier and
  synthesizer are sound and complete.
\item An extension of the algorithm capable of extracting
  counterexamples from higher-order interfaces.
\item Implementation, optimization and evaluation of a tool called \RepInvGen\
  that synthesizes invariants over recursive data types, using an unsound,
  enumerative testing engine for verification.
\end{itemize}

\section{A Motivating Example}%
\label{sec:a-motivating-example}

In this section, we give a high-level overview of \Hanoi\ using an example.
%
Consider the interface \texttt{SET}:

\begin{lstlistingsmall}
module type SET = sig
    type t
    val empty : t
    val insert : t -> int -> t
    val delete : t -> int -> t
    val lookup : t -> int -> bool
end
\end{lstlistingsmall}

\noindent
The interface declares an abstract type \CF{t} and a number of functions
that operate over \CF{t}.
\cref{fig:set-module} shows a module \CF{ListSet} that implements the
\CF{SET} interface, using \TList{\TInt} as the concrete type.

We study the problem of verifying that \CF{ListSet}
satisfies some standard properties of sets.  An example specification follows.

{\small\arraycolsep=2pt\def\arraystretch{1.125}%
$$\begin{array}{rcl}
  \Apply{\spec}{s}
    & \definedas
    & \Forall{\HasType{i}{\TInt}} \\
    &
    & \neg \Apply{\CF{lookup}}{\CF{empty}}{i} \\
    & \wedge
    & \Apply{\CF{lookup}}{\Apply{\CF{insert}}{s}{i}}{i} \mkern7mu \wedge \mkern7mu \neg \Apply{\CF{lookup}}{\Apply{\CF{delete}}{s}{i}}{i}
\end{array}$$}

\begin{figure}[!t]\small%
\begin{boxed-listing}[left=-22pt]{}
module ListSet : SET = struct
  type t = int list#\vspace*{-0.75em}#

  let empty = []#\vspace*{-0.75em}#

  let rec lookup l x =
    match l with
    | [] -> false
    | hd :: tl -> (hd = x) || (lookup tl x)#\vspace*{-0.5em}#

  let insert l x =
    if (lookup l x) then l else (x :: l)#\vspace*{-0.5em}#

  let rec delete l x =
    match l with
    | [] -> []
    | hd :: tl -> if (hd = x) then tl
                  else (hd :: (delete tl x))
end
\end{boxed-listing}
\caption{A module that implements \texttt{SET} using lists.}%
\label{fig:set-module}
\end{figure}

\noindent
Note that this specification does \emph{not} hold for arbitrary integer lists.
For example, $\Apply{\CF{lookup}}{\Apply{\CF{delete}}{\CF{[1;1]}}{\CF{1}}}{\CF{1}}$ returns \CF{true}.
Nonetheless, the \CF{ListSet} module is a correct implementation of the \CF{SET} interface,
because the specification holds for all values of the abstract type
\CF{t}
that the module can actually construct.  Such values are usually
called the \emph{representations} of the abstract type \CF{t}.  To emphasize that such values can be constructed by execution
of module operations, we say a value $v$ is 
 \emph{constructible} at type $\tau$ whenever a client with access to the
 module can produce $v$ at the type $\tau$. 


A standard approach~\citep{web18/appel/vfa} to prove
that a module implementation satisfies such a specification is to identify a
\emph{sufficient representation invariant}.
In our example, such an invariant for \CF{ListSet}
is a predicate $\HasType{\inv_\star}{\TFun{\TList{\TInt}}{\TBool}}$ that is
\begin{itemize}
  \item sufficient for \spec,
        i.e. $\Forall{\HasType{s}{\TList{\TInt}}}{\Apply{\inv_\star}{s}
          \implies \Apply{\spec}{s}}$, and
          \item whenever operations of \CF{ListSet} module are
            supplied with argument values of abstract type that satisfy the
            invariant, they produce values of abstract type that
            satisfy the invariant, \emph{i.e.}, the module is \emph{inductive}
            with respect to the invariant.
\end{itemize}
          
In other words, $\inv_\star$ contains all integer lists that are representations of type \CF{t},
and is contained in the set of integer lists that satisfy \spec.
\cref{fig:rep-invariant} shows this relationship pictorially.

For \CF{ListSet}, the predicate demanding an integer list has no
duplicates is a sufficient representation invariant for \spec.
Our tool \RepInvGen\ automatically generates that invariant:

\begin{lstlistingsmall}
  let rec $\inv_\star$ : int list -> bool = function
    | [] -> true
    | hd :: tl -> (not (lookup tl hd)) && ($\inv_\star$ tl)
\end{lstlistingsmall}


\subsection{Overview of \RepInvGen}%
\label{sec:a-motivating-example.overview-of-repinvgen}

\begin{figure}[!t]
    \centering%
    {\smaller\vspace*{0.25em}%
     $\Vpos$ = a set of known constructible values \hfill \inv\ = a candidate invariant}\\[0.375em]
    \resizebox{0.7\linewidth}{!}{%
        \begin{tikzpicture}[
            dot/.style={circle,inner sep=1.25pt,fill=green!32!white,draw=gray!80!black,name=#1},
            arrow/.style={-Latex,thick,draw=gray!80!black,decoration={snake,segment length=6pt,amplitude=2pt,pre length=2pt,post length=8pt}}
        ]
            \draw[thick, draw=red!64!black]
                (0.25,-0.075) ellipse (27mm and 16mm)
                node[anchor=north east, yshift=7mm, xshift=25mm]
                {\color{red!64!black}\small $\bm{\spec}$};
            \draw[thick, fill=blue!8!white, draw=blue!64!black]
                (-0.1,-0.125) ellipse (2cm and 14mm)
                node[anchor=east, xshift=11mm, yshift=10.5mm]
                {\color{blue!64!black}\small $\bm{\inv_\star}$};
            \draw[thick, dashed, fill=orange!5!white, fill opacity=0.5, draw=orange!48!black]
                (-1.5,-0.375) ellipse (19mm and 11mm)
                node[anchor=east, xshift=-1cm, yshift=-5mm, opacity=1]
                {\color{orange!64!black}\small $\bm{\inv}$};
            \draw[thick, fill=green!16!white, fill opacity=0.25, draw=green!50!black]
                (-0.125,-0.4) ellipse (15mm and 10mm)
                node[anchor=west, xshift=8mm, yshift=-3mm, opacity=1]
                {\color{green!32!black}\small $\bm{R}$};
            \draw[thick, dashed, fill=green!20!white, fill opacity=0.25, draw=green!32!black]
                (-0.75,-0.2) circle (4.75mm)
                node[anchor=center, opacity=1]
                {\color{green!32!black}$\bm{\Vpos}$};

            \node [dot=A1] at (-0.05,-0.625) {};
            \node [below of=A1, xshift=-1.5mm, yshift=8.25mm]
                  {\color{green!8!black}\smaller$x$};
            \node [dot=A2] at (0.875,0.125) {};
            \node [below of=A2, xshift=1.5mm, yshift=8.25mm]
                  {\color{green!8!black}\smaller$y$};

            \path [draw, arrow, decorate] (A1) -- (A2);

            \node [dot=B1,fill=red!25!white] at (-2.9,0.125) {};
            \node [below of=B1, xshift=1.5mm, yshift=8.25mm]
                  {\color{red!48!black}\smaller$z$};
        \end{tikzpicture}} \\
    {\smaller%
     $z$ is a sufficiency counterexample: $\Apply{\inv}{z} \wedge \neg \Apply{\spec}{z}$ \\[-0.25em]
     \Tuple{x}{y}\ is an inductiveness counterexample: $\Apply{\inv}{x} \wedge \neg \Apply{\inv}{y}$\\[0.375em]}
    \caption{A sufficient representation invariant $\inv_\star$ implies the spec
             and is an overapproximation of the set of
             representations $R$ of the module's abstract type.}%
    \label{fig:rep-invariant}
\end{figure}

Given a module, an interface and a specification,
\RepInvGen\ employs a form of \emph{counterexample-guided inductive
synthesis} (CEGIS)~\citep[\S 5]{sttt13/solar-lezama/sketching} to infer a sufficient representation invariant.  Specifically, we use a generate-and-check approach that iterates between two black-box components:
\begin{inlist}
  \item a synthesizer \synthesizer\ that generates a candidate invariant, which is a predicate that \emph{separates} a
        set \Vpos\ of positive and a set \Vneg\ of negative examples, and
  \item a verifier \verifier\ that checks if a candidate invariant satisfies the desired properties
        and otherwise generates a counterexample.
\end{inlist}
CEGIS has been successfully applied
to other forms of invariant inference~\citep{pldi16/padhi/loopinvgen,fmsd16/sharma/c2i,
oopsla18/ezudheen/hornice,pldi18/zhu/data-driven-chc}.

We illustrate \RepInvGen\ and its key challenges via our running example.
Initially the \Vpos\ and \Vneg\ sets are empty,
so suppose that \synthesizer\ generates the candidate invariant \CF{fun _ -> true}.
This invariant is inductive, but not sufficient.
Hence \verifier\ will provide a counterexample, for instance \CF{[1;1]},
which is an integer list that satisfies the candidate invariant
but not the specification \spec\ (see $z$ in \cref{fig:rep-invariant}).  
As the final invariant must imply \spec, \CF{[1;1]} is added to \Vneg,
which forces \synthesizer\ to choose a stronger candidate invariant in the next round.


The main challenge in using this approach is the need to handle counterexamples to inductiveness.
For instance, suppose that at some point during the algorithm we have 
$\Vpos = \{ \CF{[]}, \CF{[3]} \}$ and $\Vneg = \{ \CF{[1;1]} \}$,
and suppose \synthesizer\ generates the following candidate invariant:

\begin{lstlistingsmall}
  let $\inv$ : int list -> bool = function
    | [] -> true
    | hd :: _ -> hd <> 1
\end{lstlistingsmall}

\noindent
This candidate is not inductive over the \CF{ListSet} module.
For instance, \CF{[0]} satisfies the candidate,
but $\Apply{\CF{insert}}{\CF{[0]}}{\CF{1}} = \CF{[1;0]}$ does not.
Hence the pair \Tuple{\CF{[0]}}{\CF{[1;0]}}\ constitutes an
inductiveness counterexample (see \Tuple{x}{y}\ in
\cref{fig:rep-invariant}).  Resolving such an inductiveness counterexample requires ensuring
that either both $x$ and $y$ satisfy the candidate invariant or that neither does.  This leads to two possibilities, and the problem is that it's unclear which one is correct:
\begin{itemize}
  \item add \CF{[0]} to \Vneg\ so that it will be excluded from the next candidate invariant
  \item add \CF{[1;0]} to \Vpos\ so that it will be included in the next candidate invariant
\end{itemize}

However, observe that if \Tuple{x}{y}\ is an inductiveness counterexample
and $x$ is known to be constructible, then $y$ is constructible as well.
Since a representation invariant must include all constructible integer lists,
there is no choice to make in this case:  we must add $y$ to \Vpos.  

Based on this observation, our algorithm maintains the property that all elements of \Vpos\ both
satisfy the current candidate invariant and are known to be constructible.
To verify inductiveness of a candidate invariant, we first check a property that we call \emph{visible inductiveness},
which informally requires that there are no inductiveness counterexamples \Tuple{x}{y}\ such that $x \in \Vpos$.
If such a counterexample exists, we add $y$ to \Vpos, ask \synthesizer\ for a new candidate invariant,
and re-check visible inductiveness on the updated \Vpos.

In our running example, the candidate invariant \inv\ shown above is not visibly inductive,
so \verifier\ would produce a counterexample, for instance \Tuple{\CF{[]}}{\CF{[1]}}.
Unlike the case for the counterexample \Tuple{\CF{[0]}}{\CF{[1;0]}}\ shown earlier,
by construction the first element of this new pair is in \Vpos,
so we know that we must add \CF{[1]} to \Vpos.
We then re-check visible inductiveness, continuing in this way until the candidate invariant is visibly inductive on \Vpos.

At that point, we check full inductiveness.  Because \inv\ is visibly inductive
with respect to \Vpos,
any counterexample to full inductiveness is a pair \Tuple{x}{y}\ where $x$ is not in \Vpos.  
In this case, in order to maintain the invariant that \Vpos\ only contains constructible values
we resolve the counterexample by adding $x$ to \Vneg.
So in general, the elements of \Vneg\ all falsify the current candidate invariant,
but they may or may not be constructible.
With this new negative example, \synthesizer\ will produce a stronger candidate invariant.
We then restart the process all over again,
first weakening this new invariant to be visibly inductive and then strengthening it to be inductive.

In \cref{sec:the-repinvgen-system.properties} we show that despite this interplay between
weakening and strengthening, \RepInvGen\ is sound and complete over finite domains
if \verifier\ and \synthesizer\ are sound and complete.
That is, if a sufficient representation invariant exists then \RepInvGen\ will produce one.

The question of how to handle inductiveness counterexamples arises in prior work on data-driven invariant inference.
Some of this work also observes that if $x$ is constructible in an inductiveness counterexample \Tuple{x}{y},
then so is $y$~\citep{fmsd16/sharma/c2i, pldi18/zhu/data-driven-chc}.
However, those approaches only leverage this observation opportunistically,
when a counterexample to full inductiveness happens to satisfy it.
In contrast, we define the notion of visible inductiveness
and use this notion to eagerly weaken a candidate invariant until
no such counterexamples exist.  
We demonstrate empirically in Section~\ref{sec:experiments} that our eager search for visible 
inductiveness counterexamples provides performance benefits.
We also prove
a completeness result for our approach in the context of finite domains, which those prior approaches lack.  
To our knowledge, the only prior CEGIS-based approaches to inductive
invariant inference that have a completeness result
depend upon special-purpose synthesizers
that directly accept inductiveness counterexamples in addition to
positive and negative examples~\citep{popl16/garg/ice-dt, oopsla18/ezudheen/hornice}.


\subsection{Handling Binary Functions}%
\label{sec:a-motivating-example.handling-binary-functions}

Consider the following extension to our \CF{SET} interface,
which exposes additional functions for set union and intersection:

\begin{lstlistingsmall}
  module type ESET = sig
    include SET
    val union : t -> t -> t
    val inter : t -> t -> t
  end
\end{lstlistingsmall}

Consider an extension of the \CF{ListSet} module that supports these functions
(implementation not shown in the interest of space).
When verifying inductiveness, an inductiveness counterexample on either \CF{union} or \CF{inter}
is now a \emph{triple} \Tuple{x_1}{x_2}{y}.
This increases the number of possible ways to resolve the counterexample to four:
\begin{inlist}
  \item add $x_1$ to \Vneg,
  \item add $x_2$ to \Vneg,
  \item add both $x_1$ and $x_2$ to \Vneg, or
  \item add $y$ to \Vpos.
\end{inlist}
More generally, the number of choices grows exponentially in the number of arguments
to the function that have type \CF{t}.

\RepInvGen\ naturally extends to this setting:
By construction, a counterexample to visible inductiveness due to \CF{union} or \CF{inter}
will be a triple \Tuple{x_1}{x_2}{y}\ where $x_1$ and $x_2$ are in \Vpos, so as before we add $y$ to \Vpos.  
On the other hand, a counterexample to inductiveness due to \CF{union} or \CF{inter}
will be a triple \Tuple{x_1}{x_2}{y}\ where at least one of $x_1$ and $x_2$ is \emph{not} in \Vpos.
In this case, we simply add each $x_i$ that is not in \Vpos\ to \Vneg.

\RepInvGen\ handles $n$-ary specifications in a similar manner.
For instance, we may want to prove that a module implementing the \CF{ESET} interface
satisfies the following specification:

{\small\arraycolsep=2pt\def\arraystretch{1.125}%
$$\begin{array}{rcl}
  \Apply{\spec'}{s_1}{s_2}
    & \definedas
    & \Forall{\HasType{i}{\TInt}}{} \\
    &
    & (\Apply{\CF{lookup}}{s_1}{i} \vee \Apply{\CF{lookup}}{s_2}{i} \\
    &
    & \mkern12mu \implies \Apply{\CF{lookup}}{\Apply{\CF{union}}{s_1}{s_2}}{i}) \\
    & \wedge
    & (\Apply{\CF{lookup}}{s_1}{i} \wedge \Apply{\CF{lookup}}{s_2}{i} \\
    &
    & \mkern12mu \implies \Apply{\CF{lookup}}{\Apply{\CF{inter}}{s_1}{s_2}}{i})
\end{array}$$}
If a candidate invariant is not strong enough to imply this specification, then a counterexample will consist of a pair \Tuple{x_1}{x_2}\ where at least one of $x_1$ and $x_2$ is not in \Vpos.
In this case, we again add each $x_i$ that is not in \Vpos\ to \Vneg.

Our algorithm remains sound and complete for finite domains in the presence of these extensions, assuming the verifier and synthesizer are as well.



\section{The Inference Algorithm}%
\label{sec:the-repinvgen-system}

In this section, we describe our algorithm formally
and characterize its key properties.

\subsection{Preliminaries}%
\label{sec:the-repinvgen-system.preliminaries}

Our programming language is a first-order variant of the simply-typed
lambda calculus with functions, pairs, a base type ($\beta$) and a
single designated abstract type ($\alpha$).  
The syntax of 0-order types ($\sigma$), 1st-order types ($\tau$),
values ($v$) and expressions $e$ are provided below.
$$\def\arraystretch{1.25}%
\begin{array}{rccl}
  \text{(0-types)} & \sigma & \Coloneqq & \beta \mid \alpha \mid \TTuple{\sigma}{\sigma} \\
  \text{(1-types)} & \tau   & \Coloneqq & \sigma \mid \sigma \rightarrow \tau \mid \TTuple{\tau}{\tau} \\
  \text{(values)}  & v      & \Coloneqq & \W \mid \Tuple{v_1}{v_2} \mid \LambdaF{x}{\sigma}{e} \\
  \text{(expressions)}  & e      & \Coloneqq & x \mid v \mid \Proj{i}{e} \mid \Apply{e_1}{e_2}
\end{array}$$

We use $x$ for value variables and $\W$ for constants of the base type $\beta$.
The expression $\Proj{i}{e}$ is the $i^\text{th}$ projection from the pair $e$.
We write \HasTypeInCtx[\Gamma]{\expr}{\tau} to indicate that an expression \expr\ has type $\tau$
in the context $\Gamma$, which maps variables to their types.
We write \HasTypeInCtx{\expr}{\tau} when $\Gamma$ is empty, as
will be the case in most of this work.  We write
$\Subst{\tau}{\alpha}{\ctyp}$ to substitute $\ctyp$ for $\alpha$ in $\tau$.
Finally, and we use \EvaluatesTo{\expr}{v} to indicate that \expr\ evaluates to $v$.
We refer the reader to \citet{book/pierce/tapl} for the details.

We assume a module defines a single abstract type ($\alpha$), which is
declared in its interface.
A \emph{module interface} ($\interface = \istd$) is a pair of a name
($\alpha$) for the abstract data type and a signature \mtyp\ that specifies the
types for operations over the abstract type.
A \emph{module implementation} $\module = \mstd$ is
the classic existential package of a concrete type \ctyp\
and a value \mval\ containing operations over the type \ctyp.
We say a module \mstd\ \emph{implements} an interface \istd\
when it is well-typed as per the usual rules for
existential introduction~\cite[\S 24]{book/pierce/tapl},
i.e. \HasTypeInCtx {\mval} {\Subst{\mtyp}{\alpha}{\ctyp}}.

In addition to an interface, 
we also assume the existence of a target \emph{specification} \spec,
which captures the desired correctness criteria for a module
implementation.
These specifications are universal properties of the values of the abstract type; we formalize them as polymorphic functions over the module operations, \emph{i.e.}, $\HasType {\spec} {\Forall {\alpha} {\TFun{\mtyp}{\alpha}{\TBool}}}$.
We saw an example specification for integer sets in \cref{sec:a-motivating-example}.  


The values of an abstract type $\alpha$ are simply the values that are \emph{constructible} at type $\alpha$ through the module interface.  Below we define the notion of a $\tau$-constructible value and then use it to define when a module satisfies a specification.

\begin{definition}[$\tau$-Constructible Value:  {\Constructible[\module]{v}{\tau}}]%
  \label{def:constructible}
  A value $v$ is $\tau$-constructible using \module,
  denoted \Constructible[\module]{v}{\tau},
  iff there exists a function $\HasType{f}{\Forall{\alpha}{\TFun{\mtyp}{\tau}}}$
  such that $\EvaluatesTo{\Apply{f[\tau_c]}{\mval}}{v}$.
\end{definition}


\begin{definition}[Specification Satisfaction: {\Satisfies{\module}{\interface}{}{\spec}}]%
  \label{def:module-spec-satisfaction}
  A module \module\ with interface \interface\
  is said to satisfy a given specification \spec,
  denoted \Satisfies{\module}{\interface}{}{\spec},
  iff every $\alpha$-constructible value satisfies \spec,
  i.e. $ \Forall{\HasType{v}{\ctyp}}{\Constructible[\module]{v}{\alpha}
                                     \implies
                                     \Apply{\spec[\ctyp]}{\mval}{v}} $.
                                 \end{definition}

\subsection{Representation Invariants}%
\label{sec:the-repinvgen-system.representation-invariants}

\begin{figure*}[t]\small%
\begin{simplebox}[top=0.875em]
  \begin{mathpar}
    \inferrule{}
              {\LeadsTo{P}{Q}
                       {\W}{\beta}
                       {\Valid}}
              \quad(\customlabel{rule:i-base-valid}{\fancyname{I-B}})
    \and
    \inferrule{\HasTypeInCtx{v}{\ctyp}
               \and
               \Apply{Q}{v}}
              {\LeadsTo{P}{Q}
                       {v}{\alpha}
                       {\Valid}}
              \quad(\customlabel{rule:i-abs-valid}{\fancyname{I-A}})
    \and
    \inferrule{\LeadsTo{P}{Q}{v_1}{\tau_1}{\Valid}
               \and
               \LeadsTo{P}{Q}{v_2}{\tau_2}{\Valid}}
              {\LeadsTo{P}{Q}
                       {\Tuple{v_1}{v_2}}{\TTuple{\tau_1}{\tau_2}}
                       {\Valid}}
              \quad(\customlabel{rule:i-prod-valid}{\fancyname{I-Prod}})
    \\
    \inferrule{\HasTypeInCtx{v}{\Subst{\TFun{\sigma_1}{\tau_2}}{\alpha}{\ctyp}}
               \and
               \Forall{v_1}
                      {\Forall{v_2}
                              {\left(
                                 \LeadsTo{Q}{P}{v_1}{\sigma_1}{\Valid}
                                 \enskip\wedge\enskip
                                 \EvaluatesTo{\Apply{v}{v_1}}{v_2}
                                 \implies
                                 \LeadsTo{P}{Q}{v_2}{\tau_2}{\Valid}
                               \right)}}}
              {\LeadsTo{P}{Q}
                       {v}{\TFun{\sigma_1}{\tau_2}}
                       {\Valid}}
              \quad(\customlabel{rule:i-fun-valid}{\fancyname{I-Fun}})
            \end{mathpar}%
            \begin{centering}
              \dotfill 
                \end{centering}
  \begin{mathpar}
    \inferrule{\HasTypeInCtx{v}{\ctyp}
               \and
               \neg \Apply{Q}{v}}
              {\LeadsToT{P}{Q}
                        {v}{\alpha}
                        {\{\}}{\{ v \}}}
              \quad(\customlabel{rule:i-abs-cex}{\fancyname{I-A-CEx}})
    \\
    \inferrule{\LeadsToT{P}{Q}{v_1}{\tau_1}{S}{V}
               \and
               \HasTypeInCtx{v_2}{\Subst{\tau_2}{\alpha}{\ctyp}}}
              {\LeadsToT{P}{Q}
                        {\Tuple{v_1}{v_2}}{\TTuple{\tau_1}{\tau_2}}
                        {S}{V}}
              \quad(\customlabel{rule:i-prod-cex-1}{\fancyname{I-Prod-CEx\textsubscript{\kern0.035pc1}}})
    \and
    \inferrule{\HasTypeInCtx{v_1}{\Subst{\tau_1}{\alpha}{\ctyp}}
               \and
               \LeadsToT{P}{Q}{v_2}{\tau_2}{S}{V}}
              {\LeadsToT{P}{Q}
                        {\Tuple{v_1}{v_2}}{\TTuple{\tau_1}{\tau_2}}
                        {S}{V}}
              \quad(\customlabel{rule:i-prod-cex-2}{\fancyname{I-Prod-CEx\textsubscript{\kern0.04pc2}}})
    \\
    \inferrule{\HasTypeInCtx{v}{\Subst{\TFun{\sigma_1}{\tau_2}}{\alpha}{\ctyp}}
               \and
               \LeadsTo{Q}{P}{v_1}{\sigma_1}{\Valid}
               \and
               \EvaluatesTo{\Apply{v}{v_1}}{v_2}
               \and
               \LeadsToT{P}{Q}{v_2}{\tau_2}{S}{V}}
              {\LeadsToT{P}{Q}
                        {v}{\TFun{\sigma_1}{\tau_2}}
                        {\CollectV{\sigma_1}{v_1} \cup\, S}{V}}
              \quad(\customlabel{rule:i-fun-cex}{\fancyname{I-Fun-CEx}})
            \end{mathpar}%
                        \begin{centering}
              \dotfill 
                \end{centering}
  \begin{mathpar}
    \inferrule{}
              {\CollectV{\beta}{\W} = \{\}}
              \quad(\customlabel{rule:c-base}{\fancyname{C-Base}})
    \and
    \inferrule{}
              {\CollectV{\alpha}{v} = \{v\} \quad \textrm{if}\ \HasTypeInCtx{v}{\ctyp}}
              \quad(\customlabel{rule:c-abs}{\fancyname{C-Abs}})
    \and
    \inferrule{}
              {\CollectV{\TTuple{\sigma_1}{\sigma_2}}{\Tuple{v_1}{v_2}}
               =
               \CollectV{\sigma_1}{v_1} \cup\, \CollectV{\sigma_2}{v_2}}
              \quad(\customlabel{rule:c-prod}{\fancyname{C-Prod}})
  \end{mathpar}
\end{simplebox}
\caption{Inference rules for conditional inductiveness.}%
\label{fig:rules}
\end{figure*}

Loosely speaking, a representation invariant is a property that is
preserved by operations over the abstract type of a module.
As such, we say that a representation invariant is a \emph{fully
  inductive} property of a module.  The first part of \cref{fig:rules} defines a relation that we call \emph{conditional inductiveness}, which is a generalization of both full inductiveness and the notion of visible inductiveness described earlier. Specifically, the relation $\LeadsTo{P}{Q}{v}{\tau}{\Valid}$ may be read as
``value $v$ is conditionally inductive at type $\tau$ with respect to
properties $P$ and $Q$.''

\paragraph{Full inductiveness.}
When $P$ and $Q$ are the same property $I$ (\emph{i.e.}, $P = Q = I$) ,
these rules correspond to the standard logical relation over closed
values for System F~\cite{AmalLR}, but where there is exactly one free type variable
($\alpha$) and that type variable is associated with the concrete type
$\tau_c$ and the unary relation $I$.  Values of the
abstract type $\alpha$ are in the relation if they satisfy $I$ (rule \ref{rule:i-abs-valid}).
Products satisfy the relation if their components satisfy the relation (rule \ref{rule:i-prod-valid}).
Functions satisfy the relation
if they take arguments in the relation to results in the relation (rule \ref{rule:i-fun-valid}).

The following corollary of Reynolds' theory of parametricity~\cite{ifip83/reynolds/parametricity}
says that if $I$ is a representation invariant then all $\alpha$-constructible values satisfy it.

\begin{corollary}\label{lem:repinvariant-subsumes-constructible}
  $$
    \LeadsTo{\inv}{\inv}{\mval}{\mtyp}{\Valid}
    \implies
    (\Forall{\HasType{v}{\ctyp}}
            {\Constructible[\module]{v}{\alpha} \implies \Apply{\inv}{v}})
  $$
\end{corollary}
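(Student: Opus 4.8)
The plan is to read the statement as an instance of Reynolds' abstraction theorem (the fundamental lemma of logical relations) for our language. Taking $P = Q = \inv$ collapses the relation $\LeadsTo{\inv}{\inv}{-}{-}{\Valid}$ of \cref{fig:rules} to the standard \emph{unary} logical relation in which the single free type variable $\alpha$ is interpreted by the concrete type $\ctyp$ together with the unary predicate $\inv$. Note that the contravariant swap of $P$ and $Q$ in rules \ref{rule:i-fun-valid} and \fancyname{I-Fun-CEx} becomes vacuous once $P = Q$, so the familiar covariant reading of the function rule applies throughout.

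First I would establish the fundamental lemma itself, generalized to open terms so that the induction goes through: for every well-typed $\HasTypeInCtx[\Gamma]{e}{\tau}$ and every closing substitution $\gamma$ that sends each $\HasType{x}{\tau_x}$ in $\Gamma$ to a value with $\LeadsTo{\inv}{\inv}{\gamma(x)}{\tau_x}{\Valid}$, the closed term $\gamma(e)$ evaluates to some $v$ with $\LeadsTo{\inv}{\inv}{v}{\tau}{\Valid}$. The proof is by induction on the typing derivation of $e$: the variable and constant cases are immediate; a pair uses \ref{rule:i-prod-valid} (and a projection inverts it); and the abstraction and application cases use \ref{rule:i-fun-valid}, invoking determinism of $\Downarrow$ to match the evaluated result against the relational premise. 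This lemma is exactly the statement that every closed term, including the polymorphic constructor functions of \cref{def:constructible}, is related to itself at its type.

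Granting the lemma, I would unfold $\alpha$-constructibility. Suppose $\HasType{v}{\ctyp}$ with $\Constructible[\module]{v}{\alpha}$; by \cref{def:constructible} there is a closed $\HasType{f}{\Forall{\alpha}{\TFun{\mtyp}{\alpha}}}$ with $\EvaluatesTo{\Apply{f[\ctyp]}{\mval}}{v}$. Applying the fundamental lemma to the instantiated term $f[\ctyp]$, under the interpretation $\alpha \mapsto (\ctyp, \inv)$, gives $\LeadsTo{\inv}{\inv}{f[\ctyp]}{\TFun{\mtyp}{\alpha}}{\Valid}$. Reading this through rule \ref{rule:i-fun-valid} yields the implication: for any argument $m$ with $\LeadsTo{\inv}{\inv}{m}{\mtyp}{\Valid}$ and $\EvaluatesTo{\Apply{f[\ctyp]}{m}}{v'}$, we have $\LeadsTo{\inv}{\inv}{v'}{\alpha}{\Valid}$.

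Finally I would discharge the premise with the hypothesis of the corollary. The assumption $\LeadsTo{\inv}{\inv}{\mval}{\mtyp}{\Valid}$ is precisely the relatedness needed of the argument $m = \mval$, so the implication fires on the evaluation $\EvaluatesTo{\Apply{f[\ctyp]}{\mval}}{v}$ and gives $\LeadsTo{\inv}{\inv}{v}{\alpha}{\Valid}$, using determinism of $\Downarrow$ to identify the result with $v$. Since rule \ref{rule:i-abs-valid} is the only rule concluding relatedness at $\alpha$, this forces $\Apply{\inv}{v}$, as required. I expect the main obstacle to be the fundamental lemma: setting up the open-term generalization and relation-respecting substitutions, and in the function cases combining the semantic hypotheses with determinism of evaluation. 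Once that machinery is in place, the application to $f$ and the discharge of the premise are routine.
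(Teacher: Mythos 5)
Your proposal is correct and takes essentially the same route as the paper: the paper establishes this corollary precisely by appeal to Reynolds' abstraction theorem, observing that when $P = Q = \inv$ the rules of \cref{fig:rules} are the standard unary System~F logical relation with the single type variable $\alpha$ interpreted as the pair $(\ctyp, \inv)$. Your write-up simply fills in the standard details behind that appeal --- the fundamental lemma for open terms, its instantiation at the constructor function $f$ from \cref{def:constructible}, and the final inversion of rule \ref{rule:i-abs-valid} --- which the paper leaves to the citation.
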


\noindent
Therefore, to prove that a module meets a specification it is enough to identify a \emph{sufficient representation invariant}.

\begin{definition}[Sufficient Predicate: {\Sufficient[\spec][\module]{p}}]\label{def:sufficiency}
  A predicate $\HasType{p}{\TFun{\ctyp}{\TBool}}$ is
  \emph{sufficient for proving that \module\ satisfies \spec{}},
  denoted \Sufficient[\spec][\module]{p},
  iff $\Forall{\HasType{v}{\ctyp}}{\Apply{p}{v} \implies \Apply{\spec\,[\ctyp]}{\mval}{v}}$.
\end{definition}

\begin{definition}[Sufficient Representation Invariant]\label{def:sufficient-rep-inv}
  A predicate $\HasType{\inv}{\TFun{\ctyp}{\TBool}}$ is called a sufficient representation invariant
 for a module \module\ with respect to a specification \spec{},
  denoted \Satisfies{\module}{\interface}{\inv}{\spec},
 iff $\Sufficient[\spec][\module]{\inv} \ \wedge\
 \LeadsTo{\inv}{\inv}{\mval}{\mtyp}{\Valid}$.
\end{definition}

\begin{theorem}\label{thm:exists-repinv-implies-constructible}
  If a sufficient representation invariant exists,
  then the module satisfies the specification,
  i.e.
  \begin{align*}
    & \left( \Exists{\HasType{\inv}{\TFun{\ctyp}{\TBool}}}{\Satisfies{\module}{\interface}{\inv}{\spec}} \right) \\
    \implies & \left( \Forall{\HasType{v}{\ctyp}}{\Constructible[\module]{v}{\alpha} \implies \Apply{\spec\,[\ctyp]}{\mval}{v}} \right)
  \end{align*}
\end{theorem}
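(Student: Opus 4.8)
The plan is to derive the conclusion directly by chaining the two conjuncts that define a sufficient representation invariant with the parametricity corollary already established. First I would unfold the existential hypothesis to obtain a witness predicate $\inv$ with $\Satisfies{\module}{\interface}{\inv}{\spec}$; by \cref{def:sufficient-rep-inv} this splits into two facts, namely the sufficiency condition $\Sufficient[\spec][\module]{\inv}$ and the full-inductiveness judgment $\LeadsTo{\inv}{\inv}{\mval}{\mtyp}{\Valid}$ (the $P=Q=\inv$ instance of conditional inductiveness).

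Next I would fix an arbitrary $\HasType{v}{\ctyp}$ and assume $\Constructible[\module]{v}{\alpha}$, reducing the goal to showing $\Apply{\spec\,[\ctyp]}{\mval}{v}$. Applying \cref{lem:repinvariant-subsumes-constructible} to the full-inductiveness judgment yields that every $\alpha$-constructible value satisfies $\inv$, so in particular $\Apply{\inv}{v}$ holds. Unfolding $\Sufficient[\spec][\module]{\inv}$ via \cref{def:sufficiency}---which states exactly that $\Apply{\inv}{v} \implies \Apply{\spec\,[\ctyp]}{\mval}{v}$ for all $\HasType{v}{\ctyp}$---then discharges the goal. Since $v$ was arbitrary, this establishes the universally quantified conclusion, which is precisely the statement $\Satisfies{\module}{\interface}{}{\spec}$ of \cref{def:module-spec-satisfaction}.

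I expect no genuine obstacle: the theorem is essentially a bookkeeping composition of two implications, and all of the mathematical substance has been pushed into \cref{lem:repinvariant-subsumes-constructible}, whose own justification rests on Reynolds' parametricity---the fact that a constructible value, being the image of a polymorphic client applied to the module operations, must lie in any unary relation $\inv$ that those operations preserve. The only points demanding care are that the abstract-type name $\alpha$ is instantiated to the concrete type $\ctyp$ consistently across the constructibility, inductiveness, and sufficiency statements, and that the full-inductiveness instance consumed by the corollary coincides with the $P=Q=\inv$ specialization produced by \cref{def:sufficient-rep-inv}.
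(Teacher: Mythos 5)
Your proposal is correct and follows exactly the same route as the paper, whose proof simply cites Corollary~\ref{lem:repinvariant-subsumes-constructible}, \cref{def:sufficiency}, and \cref{def:sufficient-rep-inv}; you have merely spelled out the chaining (constructible $\Rightarrow$ satisfies $\inv$ by the corollary, then $\inv \Rightarrow \spec$ by sufficiency) that the paper leaves implicit. No gaps.
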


\begin{proof}
  Follows from Corollary~\ref{lem:repinvariant-subsumes-constructible}, \cref{def:sufficiency}, and \cref{def:sufficient-rep-inv}.
\end{proof}

\paragraph{Conditional inductiveness.}
When $P$ and $Q$ are not the same, conditional inductiveness informally requires that if the client supplies values of abstract type satisfying $P$ then
the module will produce values of abstract type satisfying $Q$.  When
conditional inductiveness is used to check visible inductiveness in our algorithm, $P$ will
be the set $V_+$ of examples that are known to be $\alpha$-constructible by the
module and $Q$ will be a candidate representation invariant.  
The most interesting rule when $P$ and $Q$ are different is the \ref{rule:i-fun-valid} rule for functions.
Specifically, notice the inversion of $P$ and $Q$ in the negative position:  If the argument is a value of abstract type, it must satisfy
$P$, not $Q$.  In other
words, this element of the formalism codifies the intuition that if the
client supplies values that satisfy $P$ then the module will supply
values that satisfy $Q$.

\paragraph{Counterexamples.}
Normally logical relations are only used to prove that an invariant is inductive.  However, we additionally require counterexamples from failed inductiveness checks, to drive our CEGIS-based invariant inference algorithm.  The second section of
\cref{fig:rules} provides the logic for refuting conditional
inductiveness and generating counterexamples.
This judgement has the form $\LeadsToT{P}{Q}{v}{\tau}{S}{V}$, which can be read as ``value $v$ is not conditionally inductive at type $\tau$ with respect to
properties $P$ and $Q$, with inductiveness counterexample witnesses $S$ and
$V$.''  Here the set $S$ contains values that satisfy $P$, the set $V$
contains values that falsify $Q$, and intuitively the values in $V$
can be computed using module operations given inputs from $S$.

As an example, consider the rule for values of abstract type
(rule \ref{rule:i-abs-cex}).
Here, a value $v$ of type $\alpha$ is not conditionally inductive if it
falsifies $Q$.  The counterexample produced includes $v$ in the set
$V$ (and returns the empty $S$), and hence satisfies the judgemental invariant explained above.
As another example, a function is not conditionally inductive (rule \ref{rule:i-fun-cex}) if there is an argument $v_1$ in the relation that causes the function to produce a result $v_2$ that is not in the relation.  In that case, the function 
$\CollectV{\sigma}{v}$ is used to collect all values of type $\alpha$ in $v_1$ to put in the returned set $S$, since they are the inputs that led to the result $v_2$.

The completeness of our algorithm for inferring sufficient representation invariants depends critically on these rules for generating counterexamples.  In particular, values in $S$ are added to the set $V_-$ of negative examples in order to strengthen a candidate invariant, while values in $V$ are added to the set $V_+$ of positive examples in order to weaken a candidate invariant.
Therefore, the returned set $S$ ($V$) must be non-empty whenever strengthening
(weakening) is required, which we prove as part of our completeness theorem.

Given this theory of counterexamples, one can appreciate why
handling higher-order functions is more challenging than first-order
functions.  Extracting counterexamples from a pair or other data structure
requires a walk of the data structure, and such a procedure is trivially
complete.  However, extracting counterexamples
from functional arguments requires execution of those arguments.
That said, it is easy to extract counterexamples from
functional arguments when the types of those functions do not include
the abstract type $\alpha$---in that case, there are no counterexamples and one
could safely return the empty set.  Therefore, our theory and formal guarantees
extend naturally to modules that contain
functions such as maps, folds (other than those that produce values of the abstract type), zips, and iterators, where function
argument types refer to the {\em element} type of a data structure,
not the abstract type of the data structure itself.  The latter
case actually appears surprisingly rare in practice, but it does
exist. For instance, the abstract type appears in a
higher-order position in a monadic interface.  We explain how we
lift the first-order restriction in our implementation in \cref{sec:implementation}.

\subsection{The Inference Algorithm}%
\label{sec:the-repinvgen-system.our-inference-algorithm}






The invariant synthesis algorithm is
parameterized by a verifier \verifier\ and a synthesizer \synthesizer.
A call \verifier\ $P$ returns \Valid\ when $P\ v$ is true on all inputs
of type $\tau_c$.  Otherwise, it returns a counterexample $v$ to the predicate.
A call \synthesizer\ \Vpos\ \Vneg\  returns a predicate $P$ that
returns true on the positive examples (\Vpos) and false on the
negative ones (\Vneg).
\Vpos\ and \Vneg\ should not overlap; if they do then \synthesizer\ will fail.

\begin{figure}[!t]\small%
\begin{boxed-listing}[left=-28pt]{}
#\textsf{\emph{Dependencies:} A synthesizer \synthesizer\ and a verifier \verifier}#

#\textsf{\emph{Globals:} An interface $\interface = \istd$, a module $\module = \mstd$ s.t. $\HasType {\mval} {\Subst{\mtyp}{\alpha}{\ctyp}}$, and a spec $\HasType {\spec} {\Forall {\alpha} {\mtyp \to \alpha \to \TBool}}$}\label{line:globals}#

(* The specification $\color{listing-comment}\color{listing-comment}\spec$ interpreted over the concrete type $\color{listing-comment}\ctyp$
 * and module implementation $\color{listing-comment}\mval$ *)
let $\spec_m$ = $\Apply{\spec[\ctyp]}{\mval}$

(* Checks whether a candidate invariant $\color{listing-comment}Q$ is
 * conditionally inductive with respect to $\color{listing-comment}P$ *)
let #\IsCondInductive# $P$ $Q$ = $R$ where $\LeadsTo{P}{Q}{\mval}{\mtyp}{R}$

(* Checks if the candidate invariant is missing any value
 * that is constructible from $\color{listing-comment}\Vpos$ in a single step *)
let #\ClosedPositives# #\Vpos# #\inv# =
  match #\IsCondInductive# #\Vpos# #\inv# with#\label{line:cond-inductiveness-check}#
  | #\Valid# $\mapsto$ #\Valid#
  | #\CounterExample{\Tuple{\_}{V}}# $\mapsto$ $\CounterExample{V}$

(* Checks if the candidate invariant is not inductive,
 * or includes values that are not constructible *)
let #\NoNegatives# #\inv# =
  match #\verifier# #\Sufficient[\spec][\module]{\inv}# with#\label{line:sufficiency-check}#
  | #\Valid# $\mapsto$ begin
      match #\IsCondInductive# #\inv# #\inv# with#\label{line:inductiveness-check}#
      | #\Valid# $\mapsto$ #\Valid#
      | #\CounterExample{\Tuple{S}{\_}}# $\mapsto$ $\CounterExample{S}$
    end
  | $\CounterExample{v}$ $\mapsto$ $\CounterExample{\{v\}}$

(* Returns a sufficient representation invariant *)
let rec #\ALGO{Hanoi}# #\Vpos# #\Vneg# =#\label{line:repinvgen-entry}#
  match #\synthesizer# #\Vpos# #\Vneg# with
  | #\Failure# $\mapsto$ failwith #"No predicate found"#
  | #\Success{\inv}# $\mapsto$ begin
      match #\ClosedPositives# #\Vpos# #\inv# with#\label{line:positive-check}#
      | $\CounterExample{P}$ $\mapsto$ #\textnormal{\ALGO{Hanoi}}# $\left( \Vpos \cup P \right)$ $\emptyset$#\label{line:positive-recurse}#
      | #\Valid# $\mapsto$ begin
          match #\NoNegatives# #\inv# with#\label{line:negative-check}#
          | $\CounterExample{N}$ $\mapsto$
            if $N \setminus \Vpos = \emptyset$ then
              failwith #"Counterexample $N$";#
            else
              #\textnormal{\ALGO{Hanoi}}# #\Vpos# $\left( \Vneg \cup \left(N \setminus \Vpos\right) \right)$#\label{line:negative-recurse}#
          | #\Valid# $\mapsto$ #\inv##\label{line:repinvgen-result}#
        end
    end
\end{boxed-listing}
\caption{The \RepInvGen\ framework.}%
\label{fig:repinvgen-algos}
\end{figure}

\Cref{fig:repinvgen-algos} presents our invariant inference algorithm.  To
execute the algorithm, a user invokes \RepInvGen\ (\cref{line:repinvgen-entry})
with empty sets for \Vpos\ and \Vneg\ respectively.
\RepInvGen\ first generates a candidate invariant \inv\
using \synthesizer, given the current \Vpos\ and \Vneg\ sets.  It then attempts to produce a candidate invariant that is visibly inductive relative to \Vpos.  That is the role of the call to \ClosedPositives\
(\cref{line:positive-check}).  That function simply calls \IsCondInductive, which uses the inference rules in \cref{fig:rules}.  In the implementation these rules are executed through interaction with the verifier \verifier. 

 Since everything in \Vpos is known to be constructible, the set $V$ of values that violate the candidate invariant must also be constructible.  Therefore, those values are returned from \ClosedPositives, and they are added to \Vpos via a recursive call to \RepInvGen\ (\cref{line:positive-recurse}).
This forces future candidate invariants produced by \synthesizer\ to
return \CF{true} on elements in $V$.  Note that each time \Vpos\ is augmented, \Vneg\ is reset
to the empty set, so the next synthesized invariant will be the constant
true function, which is trivially visibly inductive.
While we maintain the invariant that the positive examples are constructible and so must be included in the final invariant, negative examples are simply values that violate the current candidate invariant (but may in fact be constructible).


Once the candidate invariant \inv\ is visibly inductive with respect to \Vpos,
\RepInvGen\ checks for sufficiency and full inductiveness by calling
\NoNegatives\ at \cref{line:negative-check}. The \NoNegatives\ procedure
interacts with \verifier\ to check sufficiency and calls \IsCondInductive\ to
check full inductiveness. If either of these checks fail, \NoNegatives\ will
return counterexample values that satisfy the current invariant --- either a
sufficiency violation or the set $S$ from an inductiveness counterexample.
Because \inv\ is visibly inductive, $N \setminus \Vpos$ can only be empty if
there is a sufficiency violation. In that case, we have found a constructible
violation of the specification \spec, so \RepInvGen\ terminates and provides
this counterexample. If $N \setminus \Vpos$ is non-empty, \RepInvGen\ adds all
of these values to \Vneg, and \synthesizer\ will generate
a stronger candidate invariant in the next iteration. If a constructible
counterexample is added to \Vneg, it will eventually be generated by
\ClosedPositives\ and moved to \Vpos. If both checks in \NoNegatives\ succeed,
then we have found a sufficient representation invariant and it is returned.


\subsection{Soundness and Completeness}%
\label{sec:the-repinvgen-system.properties}


We say that \verifier\ is \emph{sound}
if $\Apply{\verifier}{P} = \Valid$ implies
$\Forall{\HasType{v}{\ctyp}}{\EvaluatesTo{\Apply{p}{v}}{\CF{true}}}$.
Further, \verifier\ is said to be complete if
$\verifier(p) = v$ implies $\EvaluatesTo{\Apply{p}{v}}{\CF{false}}$.
Likewise, we say that \synthesizer\ is \emph{sound} if
for all sets \Vpos\ and \Vneg\ of \ctyp\ values,
$\Apply{\synthesizer}{\Vpos}{\Vneg} = P$ implies
$\Forall{v \in \Vpos}{\EvaluatesTo {\Apply{P}{v}} {\CF{true}}}$ and
$\Forall{v \in \Vneg}{\EvaluatesTo {\Apply{P}{v}} {\CF{false}}}$.
Further, \synthesizer\ is said to be \emph{complete} if
for all sets \Vpos\ and \Vneg\ of \ctyp\ values,
whenever there exists a predicate $\HasType{P}{\ctyp \to \TBool}$ such that
$\Forall{v^{+} \in \Vpos}{\EvaluatesTo {\Apply{P}{v^{+}}} {\CF{true}}}$ and
$\Forall{v^{-} \in \Vneg}{\EvaluatesTo {\Apply{P}{v^{-}}} {\CF{false}}}$,
\synthesizer\ always returns \emph{some} predicate $P'$.



\begin{definition}[Soundness]\label{def:soundness}
  An inference system for representation invariants is said to be sound iff 
  whenever the system generates a predicate \inv, it is indeed a sufficient representation invariant,
  i.e. $\Satisfies{\module}{\interface}{\inv}{\spec}$.
\end{definition}

\begin{definition}[Completeness]\label{def:completeness}
  An inference system for representation invariants is said to be complete iff 
  whenever there exists a sufficient representation invariant \inv{}
  such that $\Satisfies{\module}{\interface}{\inv_\star}{\spec}$,
  the system always generates (terminates with) some predicate $\HasType{\inv}{\TFun{\ctyp}{\TBool}}$.
\end{definition}

\begin{theorem}\label{thm:RepInvGen-sound}
  If \:\verifier\ is sound, then \RepInvGen\ $\emptyset$ $\emptyset$ is sound.
 \end{theorem}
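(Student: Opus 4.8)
The plan is to treat soundness as a partial-correctness property: rather than reasoning about the full recursion tree or about termination of $\RepInvGen$, I would argue only about the program points at which a predicate is ever handed back to a caller. Inspecting \cref{fig:repinvgen-algos}, each branch of $\RepInvGen$ either fails, recurses, or falls through, and the \emph{only} site that returns a predicate $\inv$ directly is \cref{line:repinvgen-result}, in the $\Valid$ arm of the match on $\NoNegatives\ \inv$; all other returns merely propagate the result of a recursive call. Hence it suffices to prove the invariant ``every predicate returned by any call $\RepInvGen\ \Vpos\ \Vneg$ satisfies $\Satisfies{\module}{\interface}{\inv}{\spec}$,'' by a trivial induction on the call structure whose only nontrivial case is the direct return at \cref{line:repinvgen-result}. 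The specialization to $\RepInvGen\ \emptyset\ \emptyset$ then gives the theorem via \cref{def:soundness}.

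First I would unfold the control flow guarding \cref{line:repinvgen-result}. To reach it, the $\Valid$ arm of $\NoNegatives\ \inv$ must be taken, and by the definition of $\NoNegatives$ this forces \emph{both} of its internal checks to yield $\Valid$: the sufficiency check at \cref{line:sufficiency-check}, in which $\verifier$ is run on the sufficiency predicate $p$ with $\Apply{p}{v} = (\Apply{\inv}{v} \implies \Apply{\spec_m}{v})$ where $\spec_m = \Apply{\spec[\ctyp]}{\mval}$, and the full-inductiveness check at \cref{line:inductiveness-check}, namely $\IsCondInductive\ \inv\ \inv = \Valid$.

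Next I would discharge the two conjuncts of \cref{def:sufficient-rep-inv} separately. For inductiveness, the definition of $\IsCondInductive$ gives its return value as the result $R$ of the conditional-inductiveness judgment $\LeadsTo{\inv}{\inv}{\mval}{\mtyp}{R}$ of \cref{fig:rules}; since this check returned $\Valid$ we have exactly $\LeadsTo{\inv}{\inv}{\mval}{\mtyp}{\Valid}$. For sufficiency, I invoke the hypothesis that $\verifier$ is sound: because $\verifier$ returned $\Valid$ on the predicate $p$ above, soundness yields $\Forall{\HasType{v}{\ctyp}}{\Apply{\inv}{v} \implies \Apply{\spec_m}{v}}$, which is precisely $\Sufficient[\spec][\module]{\inv}$ by \cref{def:sufficiency}. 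Combining the two conjuncts through \cref{def:sufficient-rep-inv} establishes $\Satisfies{\module}{\interface}{\inv}{\spec}$, completing the induction.

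The argument is almost entirely definitional, so there is no deep obstacle; the care points are minor. The first is the bridge between the object passed to $\verifier$ at \cref{line:sufficiency-check} and the \emph{proposition} $\Sufficient[\spec][\module]{\inv}$ of \cref{def:sufficiency}: since $\verifier$ consumes a predicate $\ctyp \to \TBool$ and certifies it on all inputs, I must confirm it is run on exactly the predicate whose validity over all $\ctyp$-inputs is that proposition, so that $\verifier$-soundness transfers cleanly. The second, which I would flag explicitly to forestall confusion, is that the $\ClosedPositives$ guard at \cref{line:positive-check} (visible inductiveness relative to $\Vpos$) plays \emph{no} role in soundness: it is needed only for progress and completeness, and the correctness of the returned invariant rests solely on the two $\NoNegatives$ checks.
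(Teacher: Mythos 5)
Your proposal is correct and follows essentially the same route as the paper's own proof: the paper likewise observes that a predicate is returned only at the $\Valid$ arm following \NoNegatives\ (\cref{line:repinvgen-result}), and factors the rest into a lemma stating that $\Apply{\NoNegatives}{\inv} = \Valid$ together with soundness of \verifier\ forces both the sufficiency check (\cref{line:sufficiency-check}) and the inductiveness check (\cref{line:inductiveness-check}) to certify $\Satisfies{\module}{\interface}{\inv}{\spec}$. Your explicit induction on the call structure and the remark that \ClosedPositives\ is irrelevant to soundness are just elaborations of what the paper leaves implicit.
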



\begin{theorem}\label{thm:RepInvGen-complete-finite}
  If \,\verifier\ and \,\synthesizer\ are both sound and complete, and \ctyp\ is
  a finite domain,
  then \RepInvGen\ $\emptyset$ $\emptyset$ is complete.
\end{theorem}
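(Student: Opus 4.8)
The plan is to prove that \RepInvGen\ $\emptyset$ $\emptyset$ always terminates and that, given the hypothesis that a sufficient representation invariant $\inv_\star$ exists, neither \texttt{failwith} exit can fire; the only remaining way to halt is then to return a predicate, which is exactly \cref{def:completeness}. The argument rests on two loop invariants, maintained by induction on the recursion: \textbf{(i)} every element of \Vpos\ is $\alpha$-constructible, and \textbf{(ii)} \Vpos\ and \Vneg\ are disjoint. Invariant \textbf{(ii)} is immediate, since each negative recursion (\cref{line:negative-recurse}) adds only $N \setminus \Vpos$ and each positive recursion (\cref{line:positive-recurse}) resets \Vneg\ to $\emptyset$. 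For invariant \textbf{(i)}, the set $V$ returned by \ClosedPositives\ arises from a counterexample derivation $\LeadsToT{\Vpos}{\inv}{\mval}{\mtyp}{S}{V}$ whose argument-validity premises (the \ref{rule:i-fun-cex} rule, with $P = \Vpos$ constraining the argument's abstract components to \Vpos) force the collected inputs $S$ into \Vpos; a supporting lemma then shows that any value obtained from constructible inputs by module operations is again constructible---essentially by exhibiting the witnessing function of \cref{def:constructible}---so every element of $V$ is constructible and may safely be added to \Vpos.

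Termination follows from finiteness of \ctyp\ (granting that each black-box \synthesizer\ and \verifier\ call returns, and hence each conditional-inductiveness check is decided). By \textbf{(i)}, \Vpos\ is contained in the finite set of constructible \ctyp-values and only ever grows, and each positive recursion adds at least one new value: the elements of $V$ falsify \inv\ while, by soundness of \synthesizer, every element of \Vpos\ satisfies \inv. Hence a run contains at most \card{\ctyp} positive recursions. Between two positive recursions \Vpos\ is fixed, and each negative recursion strictly enlarges \Vneg, because $N \setminus \Vpos$ is nonempty (otherwise the failing branch is taken) and disjoint from the current \Vneg\ (elements of $N$ satisfy \inv\ while elements of \Vneg\ falsify it). Thus each maximal block of negative recursions has length at most \card{\ctyp}, bounding the total number of calls. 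The ``no predicate found'' exit is likewise excluded: by \textbf{(ii)} the indicator function of \Vpos\ is a total predicate of type \TFun{\ctyp}{\TBool}---a legitimate object precisely because \ctyp\ is finite---that is true on \Vpos\ and false on \Vneg, so completeness of \synthesizer\ guarantees it returns a candidate rather than \Failure. Note this step does not require $\inv_\star$ to separate \Vpos\ from \Vneg; indeed \Vneg\ may transiently contain constructible values, which is exactly why the self-correcting positive recursion is needed.

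It remains to exclude the second exit, which fires only when $N \setminus \Vpos = \emptyset$, i.e.\ $N \subseteq \Vpos$; here the existence of $\inv_\star$ is essential. If $N = \{v\}$ is a sufficiency counterexample, then $v$ satisfies \inv\ but violates $\spec[\ctyp]\,\mval$, while $v \in \Vpos$ is constructible by \textbf{(i)}; this contradicts \cref{thm:exists-repinv-implies-constructible}, which says that the mere existence of a sufficient representation invariant forces every constructible value to satisfy the specification. If instead $N = S$ is returned from the full-inductiveness check \IsCondInductive\ \inv\ \inv\ on \cref{line:inductiveness-check}, I will apply the structural lemma below to conclude $S \not\subseteq \Vpos$, so that $N \setminus \Vpos \neq \emptyset$ and the exit is again not taken.

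The structural lemma is the crux, and I expect it to be the main obstacle: \emph{if $\LeadsTo{\Vpos}{\inv}{\mval}{\mtyp}{\Valid}$ and $\LeadsToT{\inv}{\inv}{\mval}{\mtyp}{S}{V}$ both hold, then $S \not\subseteq \Vpos$}. The intuition is that if $S \subseteq \Vpos$, the arguments witnessing the full-inductiveness counterexample are already \emph{valid} arguments for the visible-inductiveness relation---the negative position of the \ref{rule:i-fun-valid} rule demands only that the argument's abstract components lie in \Vpos, which $S \subseteq \Vpos$ supplies---and they drive the module to a result falsifying \inv, contradicting the universally quantified premise of $\LeadsTo{\Vpos}{\inv}{\mval}{\mtyp}{\Valid}$. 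Making this precise calls for an induction over the conditional-inductiveness derivation that simultaneously tracks how the collection rules accumulate $S$ and how the swapped predicate positions behave as the positive-position predicate is lowered from \inv\ (full inductiveness) to \Vpos\ (visible inductiveness); the delicate point is that this lowering preserves argument validity exactly on inputs in \Vpos, and that ``the result carries a counterexample'' contradicts ``the result is valid'' via a companion induction on the result type. This lemma also subsumes the degenerate case $S = \emptyset$ (a bad value built with no abstract inputs), since $\emptyset \subseteq \Vpos$ would then contradict visible inductiveness, confirming such a value is always caught earlier by \ClosedPositives. With the structural lemma and the constructibility lemma of step \textbf{(i)} in hand, termination together with the exclusion of both exits shows that \RepInvGen\ $\emptyset$ $\emptyset$ must halt by returning a predicate, which is completeness.
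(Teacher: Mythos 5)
Your proposal is correct and takes essentially the same route as the paper's proof: your two loop invariants are exactly the paper's generalized induction hypothesis ($\Constructible{\Vpos}{\alpha} \wedge \Vpos \cap \Vneg = \emptyset$), your counting argument is its lexicographic rank $\rank(\Vpos,\Vneg) = \Tuple{\card{\ctyp}-\card{\Vpos}}{\card{\ctyp}-\card{\Vneg}}$, your sufficiency-exit argument is its appeal to the contrapositive of \cref{thm:exists-repinv-implies-constructible}, and your ``structural lemma'' is precisely the paper's \cref{lem:triangle-valid} ($\LeadsTo{P}{Q}{v}{\tau}{\Valid} \wedge \LeadsToT{Q}{Q}{v}{\tau}{S}{V} \implies \Exists{x \in S}{\neg\Apply{P}{x}}$), which the paper proves by exactly the derivation induction you sketch, with the \ref{rule:i-fun-cex} case split on whether the argument is valid in the swapped relation. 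The one detail you elide is that $V \neq \emptyset$ whenever \ClosedPositives\ returns a counterexample (needed so that each positive recursion genuinely enlarges \Vpos); this holds because \ref{rule:i-abs-cex} seeds $V$ and the remaining counterexample rules propagate it, a fact the paper records in \cref{lem:triangle-counterexample}.
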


\ifappendix Please see the appendix for proofs. \else The proofs can be found in
the appendix of the full version of this paper~\cite{todo-arxiv}. \fi  The soundness of \RepInvGen\ is straightforward and follows from the fact that an invariant is only returned if it is both sufficient and inductive.  The completeness argument for finite domains is much more involved.  As mentioned earlier, it depends on several properties of the rules for generating counterexamples in \cref{fig:rules}.  Further, we must prove that the \RepInvGen\ algorithm always terminates.  Notice that the size of the set \Vpos\ monotonically increases during the algorithm.  While \Vneg\ is reset to empty on some recursive calls, this is only done  when \Vpos\ is augmented.   Hence the following is a rank function that is bounded from below and decreases lexicographically with each recursive \RepInvGen\ call, where \card{\ctyp} denotes the number of values of type \ctyp:  $$ \rank(\Vpos, \Vneg) \definedas \Tuple{\card{\ctyp} - \card{\Vpos}}{\card{\ctyp} - \card{\Vneg}} $$



\section{Implementation}%
\label{sec:implementation}
This section describes a variety of additional aspects of our \textasciitilde{}5 KLOC OCaml implementation of
\Hanoi{}.

\subsection{The Programming Language}
We have implemented a pure, simply-typed, call-by-value functional language with
recursive data types. Numbers are implemented as a recursive data type, where a
number is either $0$ or the successor of a number. Each program includes a
prelude that may contain data type declarations and functions over those data
types. A program also contains a single module declaring an abstract type
together with operations over that abstract type. Finally, a program includes a
universally quantified specification that defines the intended behavior of the
module in terms of its operations.

\subsection{Tackling Higher-Order Functions}
\label{subsec:hofs}


While the theory presented in the previous section only supports first-order terms, our implementation allows modules to include arbitrary higher-order functions.  As mentioned earlier, the key extension required is the ability to extract counterexample values of the abstract type from functions.  Here we discuss how our implementation does that.

First, consider a
natural extension to the SET interface from
\cref{sec:a-motivating-example} to include a map function.

\begin{lstlistingsmall}
module type HOSET = sig
   include SET
   val map : (int -> int) -> t -> t
end
\end{lstlistingsmall}

\noindent
Notice that while map is a higher-order function, the type of the
higher-order argument does not involve any occurrences of the abstract
type \texttt{t}.  The same is true of iter, zip, and many other variants.
Consequently, if, during invariant inference,
(\texttt{map}\, $f$\, $v$) fails an inductiveness check on some candidate representation invariant, the counterexample values that represent the ``reason'' for this failure will never come from $f$. More generally, when the abstract type does not appear in a higher type
$\tau$, the value with type $\tau$ cannot contain counterexamples.  Our implementation therefore simply ignores such higher-order values when extracting counterexamples,
just as it ignores ordinary base types such \texttt{int}.







Now consider a further extension that includes a fold. 

\begin{lstlistingsmall}
module type FSET = sig
   include HOSET
   val fold : (int -> t -> t) -> t -> t -> t
end
\end{lstlistingsmall}
\noindent
Here, \texttt{fold} contains a function argument with a type including
\texttt{t}. The fold might be implemented as follows.

\begin{lstlistingsmall}
let rec fold f a s =
   match s with
   | [] -> a
   | hd :: tl -> f hd (fold f a tl)
\end{lstlistingsmall}

Given a call \texttt{fold f s1 s2} and a result \texttt{s$'$} that
does not satisfy the current candidate invariant, how do we extract
the counterexamples from the functional argument?  The solution
arises from reflecting back on the intuitive definition of conditional inductiveness:  ``if
clients supply values in $P$ then the module implementation should
supply values in $Q$.''  In the higher-order case, there are simply
more ways for client and implementation to interact across the module
boundary.  Specifically,the implementation supplies a value to the client when it
calls a function argument, and the client supplies a value to the module when it returns from such a function.
Fortunately, a mechanism already exists for tracking such boundary
crossings in the general case:  the higher-order contracts of Findler
and Felleisen~\cite{Findler02}.

Therefore, our implementation extracts counterexamples through higher-order contract checking.
The first-order case is straightforward.  For example, when the type is \texttt{t -> t}, we generate a contract
\texttt{$P$ -> $Q$} to check that arguments satisfy $P$ and results satisfy $Q$, and we log situations where $P$ is satisfied by an argument but $Q$ is violated by the result.
This is a direct implementation of the rule \ref{rule:i-fun-cex} in \cref{fig:rules}.

For a type such as \texttt{(int -> t -> t) -> t -> t -> t}, we simply
extend the idea, giving rise to the following contract.
\[
  \texttt{(any\_int -> $Q$ -> $P$)  -> $P$ -> $P$ -> $Q$}
\]
\noindent
As per usual, all negative
positions must satisfy $P$ and the positive ones $Q$.  Then contract checking is used to identify runs that satisfy all of the $P$ checks but fail a $Q$ check.  In that case, if $S$ is the set of values that satisfy $P$ and $v$ is the value that violates $Q$, then the extracted inductiveness counterexample is
$\Tuple{S}{\{v\}}$.

With this extension, \RepInvGen\ is trivially sound, for the same reason that the
first-order algorithm is sound (the algorithm checks for soundness
just before termination).  We conjecture that \RepInvGen\ with this extension is also complete for finite domains but have not proven it.  However, in the next section, we demonstrate empirically that our implementation can infer representation invariants in the presence of higher-order functions.


%
%
%





\subsection{Verifier and Synthesizer}
\label{sec:impl-synth}

To implement $\verifier$, we use a size-bounded
enumerative tester, which is unsound but effective in practice.
To validate a predicate with a single quantifier, we test the predicate on data
structures, from smallest to largest, until either 3000 data structures have
been processed, or the data structure has over 30 AST nodes, whichever comes
first. To validate predicates with two or more quantifiers, we instantiate each
quantifier with the smallest 3000 data structures with under 15 AST nodes. We
further limit the total number of data structures processed to 30000. These
restrictions limit the total amount of time spent in the verifier at the cost of
soundness guarantees.

To implement $\synthesizer$, we use \Myth~\cite{pldi15/osera/myth},
adapting it slightly in two ways. First, we modified it to return a set of
candidate invariants, instead of just one. Doing so permits the caching of
synthesis results described earlier. Second, we had to manage \Myth's
requirement for \emph{trace completeness}. Trace completeness requires that
whenever we provide an input-output example \Tuple{x}{y}\ for a recursive data
type, we also provide input-output examples for each subvalue of $x$. We
generate input-output pairs for \Myth\ by pairing each element of \Vpos\ with
true and each element of \Vneg\ with false. To handle trace completeness, we
first identify all subvalues of the values in \Vpos\ and \Vneg. For each such
subvalue that does not already appear in \Vpos\ or \Vneg\, we simply add it to
\Vneg, which has the effect of mapping it to false. However, it could be that
these values are actually constructible; if they are, future visible
inductiveness checks will find this inconsistency, and move the value to \Vpos.
However, this solution sometimes does make our synthesis task more difficult, as
these additional values in \Vneg\ can force candidate invariants to be stronger than necessary.  In such cases the synthesizer can spend more time searching for a complex invariant, when a simpler (though weaker) one would suffice.

\subsection{Optimizations}
\label{subsec:alg}
To accelerate invariant inference, we have implemented two key
optimizations:
\textit{synthesis result caching} and \textit{counterexample list
  caching}.
Synthesis result caching
reduces the number of synthesis calls, and counterexample list caching
reduces the number of verification and synthesis calls.
Since the bulk of the system run time is spent in one or both kinds of calls,
reducing them can have a substantial impact on performance.

\paragraph*{Synthesis Result Caching.} When synthesizing, \Myth\ often finds
multiple possible solutions for a given set of input/output examples. Instead of
throwing the unchosen solutions away, we store them for future synthesis calls.
When given a set of input/output examples, before making a call to \Myth{}, we
check if any of the previously synthesized invariants satisfy the input/output example
set. If one does, that invariant is used instead of a freshly synthesized one.

\paragraph*{Counterexample List Caching}
\begin{figure}
  \centering
  \subfigure[]
    {\label{subfig:initial-result-list}
    \includegraphics[scale=.3]{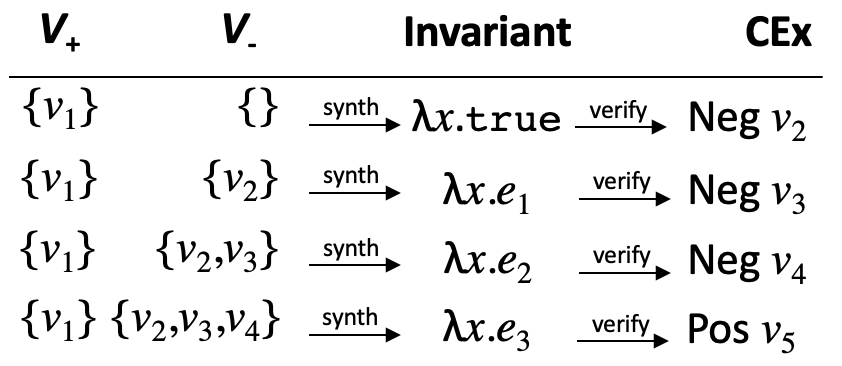}}
  \subfigure[]
    {\label{subfig:result-filter}
    \includegraphics[scale=.3]{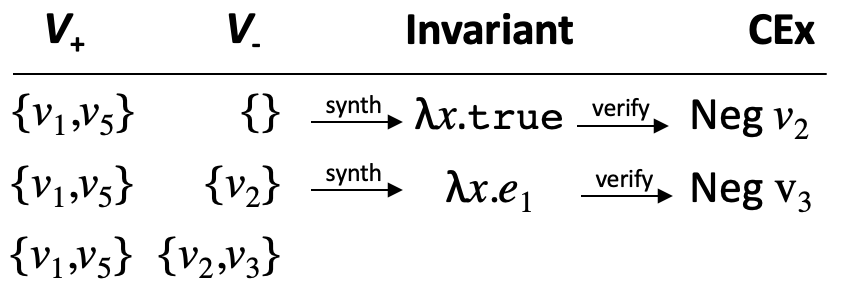}}
  \caption{Partial traces of synthesis and verification results from a run
    of \RepInvGen\ without counterexample list caching enabled.}
\end{figure}
Consider the trace of \Hanoi\ shown in \cref{subfig:initial-result-list}.
In this example, \RepInvGen\ was just called with $v_1$ as the only positive
example, and with no negative examples. With no negative examples,
the synthesizer proposes $\lambda x.\CF{true}$ as a candidate
invariant and then verification subsequently provides the negative
counterexample, $v_2$, which
then becomes the only negative example in the next attempt at
synthesis. This loop of proposing new
invariants, and adding their negative counterexamples to the negative example
set continues until $\lambda x.e_3$ is proposed, which provides the positive
counterexample, $v_5$. 

\begin{figure}
    \includegraphics[scale=.3]{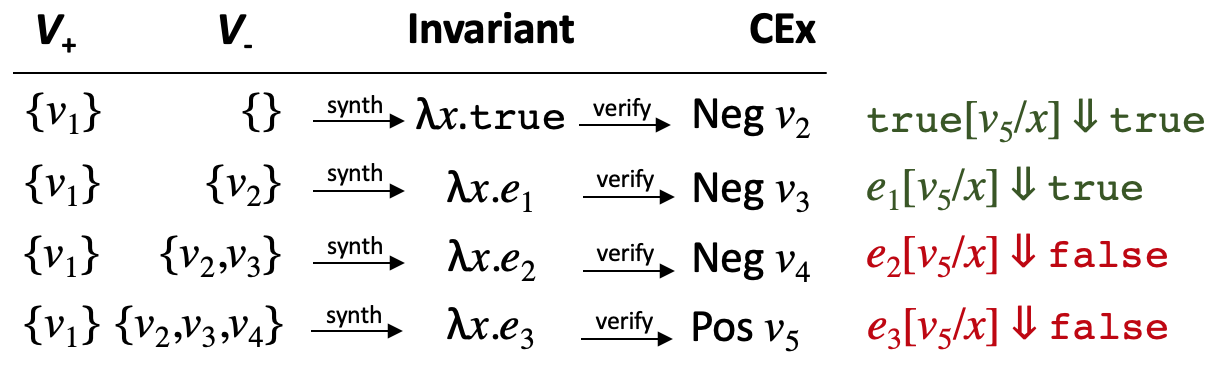}
    \caption{Results of running a positive counterexample on the trace shown in
      \cref{subfig:initial-result-list}.}
    \label{fig:filter}
\end{figure}

Next, according to the unoptimized algorithm, one should begin a run
with $\{v_1,v_5\}$ as positive examples and no negative examples---see
\cref{subfig:result-filter} for a partial trace of this
subsequent execution. Suppose that $v_5$ satisfies the first two synthesized invariants from the original run.  In that case, those invariants will simply be synthesized again as the first two candidates of this new run, as shown in the figure.  To avoid this
recomputation, we cache these traces of synthesis and verification calls. When
we receive a new positive example, we run it on the synthesized invariants in
the trace, as shown in \cref{fig:filter}. Because $\lambda x.\CF{true}$
and $\lambda x.e_1$ both return \CF{true} on $v_5$, we can skip the entirety of the
trace shown in \cref{subfig:result-filter} and begin by synthesizing from
the testbed $\Vpos = \{v_1,v_5\}$ and $\Vneg = \{v_2,v_3\}$.


\section{Experimental Results}%
\label{sec:experiments}

We aim to answer the following research questions:

\begin{enumerate}
\item Can we infer sufficient representation invariants in practice?
\item What are the primary performance factors?
\item What effect on performance do our optimizations have?
\item How does our algorithm compare with prior work?
\end{enumerate}

\subsection{Benchmark Suite}


We evaluate \RepInvGen\ on a total of \NumBenchmarks\ verification problems,
most of which require reasoning over list or tree structures.
We categorize them into the following four groups.
\begin{itemize}[topsep=0.5em,itemsep=0.25em]
\item \fancyname{VFA}~($5$):
  Four modules from Verified Functional
  Algorithms (VFA)~\citep{web18/appel/vfa} that have interfaces and
  specifications over those interfaces, including
  tree- and list-based implementations of lookup
  tables and priority queues.  We
  also experimented with a
  second version of priority queues that excludes the \CF{merge} function.
\item \fancyname{VFAExt}~($3$):
  Three \fancyname{VFA} modules with additional function(s)
  and corresponding specifications from the Coq~\citep{coq} standard library.
\item \fancyname{Coq}~($14$):
  Five tree- and list-based implementations of
  data structures from the Coq~\citep{coq} standard library.
  One additional problem for each of the five
  by introducing additional binary functions.  Four more problems by
  extending interfaces with higher-order functions.
\item \fancyname{Other}~($6$): Six additional benchmarks of our own creation requiring
  reasoning over lists, natural numbers, monads and other basic data structures. 
\end{itemize}


\subsection{Experimental Setup}

All experiments were performed on a 2.5 GHz Intel Core i7 processor with 16 GB
of 1600 MHz DDR3 RAM running macOS Mojave. We ran each benchmark 10 times with a
timeout of 30 minutes and report the average time. If any of the 10 runs time
out then we consider the benchmark as a whole to have timed out.

\subsection{Inferred Invariants}

\cref{fig:running-times} presents our results.  Overall,
\RepInvGen\ terminated with an invariant on \NumSuccess\ out of
\NumBenchmarks\ benchmarks within the timeout bound. The second
column shows the sizes of the inferred
invariants, in terms of their abstract syntax trees.

Though our verifier is unsound, there was no effect on the reliability
of the system on
our benchmark suite:  \NumSuccess\ of the \NumSuccess\ inferred invariants are
correct.  Further, some of them are quite sophisticated.  For example, we
synthesize a heap invariant, which requires that the the elements of each node's
subtrees is smaller than that node's label.  We synthesize invariants over lists
including ``max element first,'' ``no duplicates,'' and
``ordered.''  If we allow the system to exceed the 30 min threshold, the
system will infer a binary search tree invariant as well.

In seven of the cases above, we run into a limitation of the \Myth\
synthesizer rather than our algorithm:  \Myth\ cannot synthesize
functions that require recursive ``helper'' functions.
To bypass this
restriction, we added a \CF{true_maximum} function (that finds the maximum
element of a tree) to our tree-heap benchmark and a \CF{min_max_tree} function
(that finds the minimum and maximum elements of trees) to our bst and
red-black-tree benchmarks.  We
added a * next to the names of benchmarks that we altered by providing
a helper function in this way (see \cref{fig:running-times}).

\subsection{Primary Performance Factors}

\begin{figure*}%
  \small
  \centering%
  \begin{tabulary}{\textwidth}{R|C|C|C|C|C|C|C|C}
    \Name & \Size & \Time\ (s) & \TVT\ (s) & \TVC & \MVT\ (s) & \TST\ (s) & \TSC & \MST\ (s)\\
    \hline \hline \csvreader[head to column names]{generated-data-annotated.csv}{}
    {\Test & \InvariantSize & \MythTime & \MythTotalVerifTime & \MythVerifCalls & \MeanVerifTime &
      \MythTotalSynthTime & \MythSynthCalls & \MeanSynthTime\\}
  \end{tabulary}
  \caption{Information from running \Hanoi\ on our benchmark suite. \Name\ is
    the name of the benchmark. \Size\ is the size of the inferred invariant.
    \Time\ is the time to run the benchmark from start to end. \TVT\ is the
    total time spent verifying. \TVC\ is the total number of verification calls.
    \MVT\ is the average time for a single verification call. \TST\ is the total
    time spent synthesizing. \TSC\ is the total number of synthesis calls. \MST\
    is the average time for a single synthesis call. Benchmarks marked with a *
    were provided an additional function to enable synthesis by \Myth.}%
  \label{fig:running-times}
\end{figure*}


When benchmarks complete within the 30 minute bound, most of the time is
spent in verification. Indeed, for
all but two of the terminating benchmarks, the total time spent synthesizing is
under two seconds.

Three factors affect verification times significantly: (1) the strength of the
specification, (2) the complexity of the underlying data structure, and (3) the
presence of higher order functions.
First, it takes our verifier longer to validate a true fact than to find a
counterexample to a false one (validation requires enumeration of all
tests; in contrast, the moment a counterexample is found, the
enumeration is short-circuited).
Many candidate invariants imply weak specifications but are
not inductive. Hence weak specifications, ironically, are quite
costly, because many candidate invariants wind up being sufficient (incurring a significant verification expense each
time), only to be thrown away later when it turns out they are not
inductive.
%
%
Second, relatively simple data structures, like natural numbers and lists
with numbers as their elements, take less time to verify than more
complex data structures, like trees, tries, and lists with more complex
elements. 
Third, like other complicated data types, the use of higher-order functions
increases verification time. There are many ways to build a function, so
enumeratively verifying a higher-order function requires searching through many
possible functions.

However, the story is different for the complex benchmarks that do not complete
within 30 minutes. When we ran \Hanoi\ on our bst set benchmark, it completed in
78.4 minutes. Unlike the prior benchmarks, the majority of the time
(65\%) was spent in
synthesis. Moreover, 30\% of the total time was spent on the synthesis call
that generated the final invariant.
This indicates that \Hanoi\ is
currently not gated by the verifier, but by the synthesizer. Indeed, the
implementation of bst set that includes binary functions like union and
intersection is actually much faster than that without union and intersection,
terminating within our 30 minute timeout. Adding these functions makes the
verification harder, but \Myth\ can use them to generate simpler invariants. Adding
helper functions that permit simpler invariants also makes our implementation of
a bst table verifiable in under 30 minutes.


Due to these limitations, we believe that a smarter synthesizer would be able to
find more invariants. To this end, we built a prototype synthesizer that can
generate more complex types of functions. This synthesizer has similarities
to \Myth\ as it is type-and-example directed and enumerative. However, where \Myth\
can only synthesize simple recursive functions, this alternate synthesizer can
synthesize \emph{folds}, letting our synthesizer generate functions that
require accumulators.
Our synthesizer performs
comparably to \Myth, synthesizing invariants for the 20 benchmarks \Myth\ that can solve an
average of 11\% slower. However, our synthesizer is also able to find the invariant
for a binary heap (/vfa/tree-::-priqueue) without requiring helper functions or
functions defined in the module in 185.4 seconds (55.8
seconds for /vfa/tree-::-priqueue+binfuncs), while \Myth\ fails.

\subsection{Comparisons}
\label{sec:comparisons}

\begin{figure}
    \includegraphics[scale=.87]{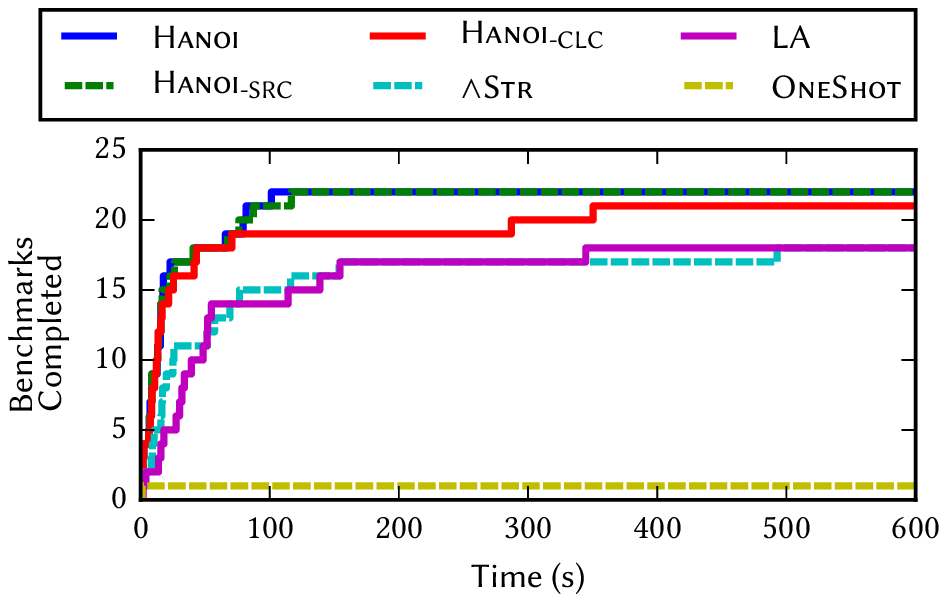}
    \caption{Number of benchmarks that terminate in a given time. \Hanoi\ is the
      full \Hanoi\ tool. \HanoiNoSRC\ is \Hanoi\ with synthesis result caching
      turned off. \HanoiNoCLC\ is \Hanoi\ with counterexample list caching
      turned off. \AndStrengthen\ is \Hanoi\ using a conjunctive strengthening
      algorithm similar to that of \LoopInvGen. \LA\ is \Hanoi\ using a
      counterexample generation strategy similar to that of \LinearArbitrary.
      \OneShot\ synthesizes based on the results of running the 30 smallest
      values on the specification. All the tests that terminated within the 30
      minute timeout terminated within the first 10.}
    \label{fig:times}
\end{figure}
\cref{fig:times} summarizes the results of running of \Hanoi, \Hanoi\ without
optimizations, and our implementations of prior approaches adapted to our setting. 

\paragraph*{Impact of Optimizations.}
The modes \HanoiNoSRC\ and \HanoiNoCLC\ tested the impacts of our optimizations
described in \S\ref{subsec:alg}.
\HanoiNoSRC\ runs the benchmarks with synthesis result caching turned off.
\HanoiNoCLC\ runs the benchmarks with counterexample list caching turned off.

Removing synthesis result caching does not have a large impact on the majority
of benchmarks
as the majority of our benchmarks spend relatively little time in
synthesis. However, more complex benchmarks are able to enjoy the benefits of
this optimization.

Counterexample list caching
has 
significant impact on complex benchmarks
as they have
more synthesis and verification calls. 
The synthesizer requires more input/output
examples to synthesize the correct invariant on complex benchmarks, so saving
time reconstructing the negative examples via counterexample list
caching
has great impact.

\paragraph*{Comparison to \AndStrengthen.} The \AndStrengthen\ mode
simulates the \LoopInvGen{} algorithm~\cite{pldi16/padhi/loopinvgen}, a
related data-driven system for inferring loop invariants.
When running
\AndStrengthen, if a candidate invariant $\inv_1$ is sufficient to prove the
specification, but is not inductive, the algorithm attempts to synthesize a new predicate $\inv_2$ such that
$\LeadsTo{\inv_1 \wedge \inv_2}{\inv_1}{\mval}{\mtyp}{\Valid}$.  In that case, $\inv_1 \wedge \inv_2$ is considered the new candidate invariant.  This process continues until either the conjoined
invariants are inductive, or they are overly strong so a new positive
counterexample is found, at which point the whole process restarts. \Hanoi\ outperforms \AndStrengthen\ on all the
benchmarks and solves 3 more benchmarks within 30 minutes. The main 
downside of \AndStrengthen\ is that it can only add new positive examples in order to weaken the candidate invariant after it has obviously over-strengthened.  \Hanoi, in contrast, uses visible inductiveness checks to eagerly weaken in a directed manner.

\paragraph*{Comparison to \LA.} \LA\ mode simulates the
\LinearArbitrary\ algorithm~\cite{pldi18/zhu/data-driven-chc}, which is
used in a data-driven CHC solver. There are two differences from
\RepInvGen.  First,
\LA\ tries to satisfy individual inductiveness constraints, generated for each function in the module, one at a time rather than all at once.  Second, rather than eagerly searching for visible inductiveness violations, only full inductiveness counterexamples are obtained.  However, if a full inductiveness counterexample happens to also be a visible inductiveness counterexample then it is treated accordingly. \Hanoi\
outperforms \LA\ on all the benchmarks and is able to solve 4
more benchmarks within 30 minutes. While \Hanoi\ checks eagerly for positive
counterexamples, \LA\ finds them nondeterministically. Without performing the
guided search through visible inductiveness checks, the algorithm sometimes
gets ``stuck'' in holes of negative counterexamples. While the algorithm does
seem to emerge from these holes eventually, it takes time.

\paragraph*{Comparison to \OneShot.} The \OneShot\ mode uses ``one shot
learning'' rather than an interative CEGIS algorithm. The \OneShot\ algorithm
runs the specification over the smallest 30 elements of the concrete
implementation type, tagging each element as either positive or negative. Doing
so generates sets \Vpos\ and \Vneg, which may be supplied to the synthesizer.
Whatever invariant synthesized is returned as the result. (This algorithm only
works when the specification quantifies over a single element of the abstract
type, which is true for all but 7 of our benchmarks.) Running the \OneShot\
algorithm fails on all but one of our benchmarks, \CF{coq/unique-list-set}, and
does so for a variety of reasons. On some benchmarks, \Myth\ times out,
indicating that the given synthesis problem was too hard, and \Myth\ needed to be
provided fewer examples to find the right invariant. On some
benchmarks, \Myth\ returns the wrong invariant, indicating that the synthesis problem
was underspecified, and too few examples
were given. Merely choosing some fixed number of
examples to build the invariant with is insufficient, that fixed number is too
high for some benchmarks, and too low for others.


\section{Related Work}%
\label{sec:related-work}

\paragraph*{Inferring Representation Invariants.}
To our knowledge, the only prior work that attempts to
automatically infer representation invariants for data structures
is the \Deryaft\ system
by \citet{tacas07/malik/deryaft}, which targets Java classes.
There are three key differences between systems.
First, \Deryaft\ requires a fixed set of predicates (e.g., sortedness)
as an input; the invariants generated are conjunctions of these predicates.
In contrast, \RepInvGen\ can learn new predicates from a general grammar of programs.
Second, the conjunction of predicates that \Deryaft\ produces consists of those predicates
that hold on a fixed set of examples (generated from test executions).
There is no guarantee the final invariant is inductive.
In contrast, \RepInvGen\ employs a CEGIS-based approach to refining
the candidate invariant, terminating only
when a sufficient representation invariant has been identified.
Third, the \Deryaft\ algorithm comes with no theoretical
completeness guarantee.





\paragraph*{Solving constrained Horn clauses.}
Recently, several tools have been developed to infer predicates that
satisfy a given set of {\em constrained Horn clauses} (CHCs).
Inferring representation invariants can be seen as a special case of
CHC solving, since all of our inductiveness constraints are Horn
clauses (e.g., $\Apply{\inv_\star}{s_1} \wedge \Apply{\inv_\star}{s_2}
\implies \Apply{\inv_\star}{(\CF{union}\ s_1\ s_2)})$.  CHCs can
include multiple unknown predicates in their inference problem, whereas there
is only one in ours.

However, existing CHC solvers do not support inference of recursive predicates, which is necessary to handle representation invariants over recursive data types.  Several solvers support only arithmetic constraints~\cite{fmcad17/fedyukovich/freqhorn,
  pldi18/zhu/data-driven-chc, oopsla18/ezudheen/hornice}, while others
support arrays or bit
vectors~\citep{tacas18/champion/hoice,fmsd16/komuravelli/spacer,fmcad18/hojjat/eldarica}
as well.
To our knowledge, 
\Eldarica~\citep{fmcad18/hojjat/eldarica}
is the only CHC solver that supports algebraic data types.
However, \Eldarica\ only computes recursion-free solutions~\citep[Sec III\,C]{fmcad18/hojjat/eldarica}
and therefore cannot express the \emph{sortedness} or
\emph{no-duplicates} properties, for instance.

Though they only handle arithmetic constraints, two of these solvers employ a data-driven technique that is similar to our approach, iterating between a synthesizer and a verifier~\cite{pldi18/zhu/data-driven-chc,oopsla18/ezudheen/hornice}. Like our work, the approach of \citet{pldi18/zhu/data-driven-chc} leverages the observation that handling inductiveness counterexamples
$\Tuple{x}{y}$ is easy when we know that $x$ is constructible.  However, their approach simply checks if a counterexample to full inductiveness happens to have this property, while we exhaustively iterate through these counterexamples until a candidate invariant is \emph{visibly inductive}.  
Intuitively, our approach minimizes the number of inductiveness counterexamples that must be treated heuristically.  
We show that this difference results in a significant performance improvement, and we have proven a completeness result for finite domains, while the approach of \citet{pldi18/zhu/data-driven-chc} lacks a completeness result.  The approach of \citet{oopsla18/ezudheen/hornice} does have a completeness result, and it applies to the infinite domain of integers.  However, they
achieve this guarantee through the use of a specialized synthesizer
designed to handle inductiveness counterexamples directly, while our
approach can use any off-the-shelf synthesizer.  There is also no obvious
analogue to our analysis of higher-order programs in this context.

\paragraph*{Inferring inductive invariants.}  There have been many
techniques developed to infer individual inductive invariants for program verification, for example an inductive invariant for a loop or for a system transition relation.
As discussed in
\cref{sec:a-motivating-example.handling-binary-functions},
module functions may consume or produce multiple arguments or results
of the abstract type, which results in a more general class of inductiveness
counterexamples, whereas loops and transition relations, when viewed as functions, consume
and produce exactly one ``state'' (the analogue of an abstract
value in our setting).

\RepInvGen{} is similar in structure to several data-driven
invariant synthesis engines~\citep{fmsd16/sharma/c2i, pldi16/padhi/loopinvgen, oopsla18/ezudheen/hornice, fse17/nguyen/numinv,
tacas18/neider/incomplete, fmcad19/barbosa/unifpi, nips18/si/code2inv, pldi19/le/sling,
popl16/garg/ice-dt, fmcad17/fedyukovich/freqhorn}.  We experimentally compared our
algorithm to our implementations of the most closely related ones, adapted to
the context of representation invariant synthesis
(\cref{sec:comparisons}).
Broadly, the technical distinctions are
similar to those described above for data-driven CHC solvers.  In particular: (1)
our development of visible inductiveness is novel; (2) aside from
one tool~\cite{popl16/garg/ice-dt} that depends upon a special-purpose
synthesizer to handles inductiveness counterexamples directly,
none are proven complete; (3) they
cannot process higher-order programs; and (4) to our knowledge, none infer recursive invariants. 

The \lstinline{ic3} algorithm for SAT-based model checking employs a notion of {\em relative inductiveness}~\cite{Bradley11}, which is closely related to our notions of conditional and visible inductiveness.  Formally, relative inductiveness is the special case of our conditional inductiveness relation $\LeadsTo{P}{Q}{v}{\tau}{\Valid}$ where $P$ has the form $Q' \wedge Q$ and $\tau = \alpha \rightarrow \alpha$.  The \lstinline{ic3} algorithm uses relative inductiveness to incrementally produce an inductive invariant, by iteratively identifying a state $s$ that leads to a property violation and then conjoining  an inductive strengthening of $\neg s$ to the candidate invariant.  Our notion of visible inductiveness is also used to incrementally produce an inductive invariant, but it works in the opposite direction:   we iteratively identify constructible values of the abstract type and use them to weaken the candidate invariant.  This approach is a natural fit for our data-driven setting.

\paragraph*{Automatic data structure verification.}
The \Leon\ framework \cite{sas11/suter/leon}
can automatically verify correctness of data structure
implementations, but to do so, a user must manually define an
\emph{abstraction function}, which plays a similar role to a representation invariant.
Namely, the abstraction function is a partial function mapping an
element of the concrete type to an element of the abstract type.

There are many techniques for proving properties of heap-based data
structures,
including shape analysis~\cite{Sagiv:1999} and liquid
types~\cite{Kawaguchi:2009}.  These techniques can prove and/or infer
sophisticated invariants, often of imperative code.
However, they are designed to tackle a different problem and do not
infer the inductive representation invariants that are needed to prove modules correct.



\section{Conclusion}%
\label{sec:conclusion}
We present a novel algorithm for synthesis of representation invariants. Our key
insight is that it is possible to drive progress of the algorithm towards its
goal not by eagerly searching for fully inductive invariants, but rather by
searching first for \emph{visibly inductive} invariants. We have proven that our
algorthm is sound and complete, given a sound and complete verifier and
synthesizer, for a first-order type theory with finite types. We also explain
how to extend the algorithm to modules containing higher-order functions, which
involves using contracts to validate and collect objects that cross the module
boundary. We evaluate our algorithm on \NumBenchmarks\ benchmarks and find that
we are able to synthesize \NumSuccess\ of the invariants within 30 minutes (and
most of those in under a minute). Our algorithm is defined independently of the
black-box verifier and synthesizers; as research in verifier and synthesizer
technologies improve, so too will the capabilities of our overall system. While
the tool is currently fully automated, we view this as a step towards an
interactive approach to helping users of proof assistants produce correct
representation invariants.

\begin{acks}
Thanks to our shepherd James Bornholt and the other reviewers for their helpful comments.  This work is supported in part by NSF grant CCF-1837129 and a PhD fellowship from Microsoft Research.
\end{acks}

\bibliography{paper}

\clearpage
\ifappendix
\appendix
\section{Full Example List}
\begin{figure*}%
  \centering%
  \begin{tabulary}{\textwidth}{R|C|C|C|C|C|C|C|C}
    \Name & \Size & \Time\ (s) & \TVT\ (s) & \TVC & \MVT\ (s) & \TST\ (s) & \TSC & \MST\ (s)\\
    \hline \hline \csvreader[head to column names]{generated-data-annotated.csv}{}
    {\Test & \InvariantSize & \MythTime & \MythTotalVerifTime & \MythVerifCalls & \MeanVerifTime &
      \MythTotalSynthTime & \MythSynthCalls & \MeanSynthTime\\}
  \end{tabulary}
  \caption{Information from running \Hanoi\ on our benchmark suite. \Name\ is
    the name of the benchmark. \Size\ is the size of the inferred invariant.
    \Time\ is the time to run the benchmark from start to end. \TVT\ is the
    total time spent verifying. \TVC\ is the total number of verification calls.
    \MVT\ is the average time for a single verification call. \TST\ is the total
    time spent synthesizing. \TSC\ is the total number of synthesis calls. \MST\
    is the average time for a single synthesis call. Benchmarks marked with a *
    were provided an additional function to enable synthesis by \Myth. Elided
    benchmarks are marked in bold.}%
  \label{fig:running-times-two}
\end{figure*}

See Figure~\ref{fig:running-times-two}.

\section{Proofs of Theorems}%
\label{sec:proofs}

\newcounter{origsection}
\newcounter{origtheorem}

\newcommand{\forcenumber}[2]{%
  \setcounter{origsection}{\value{section}}%
  \setcounter{origtheorem}{\value{theorem}}
  \def\thesection{\arabic{section}}%
  \setcounter{section}{#1}%
  \setcounter{theorem}{#2}%
  \addtocounter{theorem}{-1}}

\newcommand{\resetnumber}{%
  \def\thesection{\Alph{section}}%
  \setcounter{section}{\value{origsection}}%
  \setcounter{theorem}{\value{origtheorem}}}

\setlist[itemize]{topsep=0.5em,itemsep=0.375em,partopsep=0pt,parsep=2pt}

\subsection{Additional Definitions}

For convenience, we lift the notion of constructibility to sets and predicates:

\begin{definition}[$\tau$-Constructible Set: {\Constructible[\module]{S}{\tau}}]%
  \label{def:constructible-set}
  A set $S$ of values is said to be $\tau$-constructible using \module,
  denoted \Constructible[\module]{S}{\tau},
  iff every element of $S$ is $\tau$-constructible using \module,
  i.e. $\Forall{v \in S}{\Constructible[\module]{v}{\tau}}$.
\end{definition}

\begin{definition}[$\tau$-Constructible Predicate: {\Constructible[\module]{p}{\tau}}]%
  \label{def:constructible-predicate}
  A predicate $\HasType{p}{\TFun{\Subst{\tau}{\alpha}{\ctyp}}{\TBool}}$
  is said to be $\tau$-constructible using \module,
  denoted \Constructible[\module]{p}{\tau},
  iff the set of values that satisfy $p$ is $\tau$-constructible using \module,
  i.e. $\Constructible[\module]{\{ v \mid p\ v \}}{\tau}$.
\end{definition}

\noindent
We omit \module\ from the subscript when it is clear from context.

\subsection{Type Safety of Conditional Inductiveness Rules}

\begin{lemma}
  If\, \LeadsTo{P}{Q}{v}{\tau}{\Valid} then\, \HasTypeInCtx{v}{\Subst{\tau}{\alpha}{\ctyp}}.
\end{lemma}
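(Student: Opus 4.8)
The plan is to proceed by structural induction on the derivation of $\LeadsTo{P}{Q}{v}{\tau}{\Valid}$, with one case for each of the four validity rules in \cref{fig:rules}. In every case the goal is to establish $\HasTypeInCtx{v}{\Subst{\tau}{\alpha}{\ctyp}}$ in the empty context, so I would appeal to the standard typing rules of the first-order calculus together with the routine fact that the substitution $\Subst{\cdot}{\alpha}{\ctyp}$ commutes with the type constructors, e.g. $\Subst{\TTuple{\tau_1}{\tau_2}}{\alpha}{\ctyp} = \TTuple{\Subst{\tau_1}{\alpha}{\ctyp}}{\Subst{\tau_2}{\alpha}{\ctyp}}$ and similarly for arrows.

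Two of the rules already carry the needed typing judgment as an explicit premise, so those cases are immediate and do not use the induction hypothesis. For \ref{rule:i-abs-valid} we have $\tau = \alpha$, and the premise $\HasTypeInCtx{v}{\ctyp}$ is exactly $\HasTypeInCtx{v}{\Subst{\alpha}{\alpha}{\ctyp}}$. For \ref{rule:i-fun-valid}, where $\tau = \TFun{\sigma_1}{\tau_2}$, the first premise is literally $\HasTypeInCtx{v}{\Subst{\TFun{\sigma_1}{\tau_2}}{\alpha}{\ctyp}}$, which is the goal; in particular the negative-position premise quantifying over arguments plays no role.

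The base case \ref{rule:i-base-valid} has $v = \W$ and $\tau = \beta$; since $\beta$ has no free occurrence of $\alpha$, we have $\Subst{\beta}{\alpha}{\ctyp} = \beta$, and $\W$ is a constant of base type $\beta$ by assumption, giving $\HasTypeInCtx{\W}{\beta}$. The only case that genuinely invokes the induction hypothesis is \ref{rule:i-prod-valid}: from the two subderivations the hypothesis yields $\HasTypeInCtx{v_1}{\Subst{\tau_1}{\alpha}{\ctyp}}$ and $\HasTypeInCtx{v_2}{\Subst{\tau_2}{\alpha}{\ctyp}}$, and the pair-introduction rule then gives $\HasTypeInCtx{\Tuple{v_1}{v_2}}{\TTuple{\Subst{\tau_1}{\alpha}{\ctyp}}{\Subst{\tau_2}{\alpha}{\ctyp}}}$, which equals $\HasTypeInCtx{\Tuple{v_1}{v_2}}{\Subst{\TTuple{\tau_1}{\tau_2}}{\alpha}{\ctyp}}$ by the commutation noted above.

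I do not expect a real obstacle here: the statement is essentially a bookkeeping result confirming that the validity judgment is type-respecting, which is unsurprising since it is a logical relation layered over the type system. The only points requiring any care are pushing the substitution $\Subst{\cdot}{\alpha}{\ctyp}$ correctly through the product and arrow constructors, and noticing that the rules for abstract-type and function values embed the typing premise directly, so most of the work is already discharged by the rules themselves.
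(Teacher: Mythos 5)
Your proof is correct and follows essentially the same route as the paper's: induction on the derivation, with the abstract-type and function cases discharged immediately by their explicit typing premises, the base case trivial, and the product case as the sole genuine use of the induction hypothesis. The only difference is that you spell out the (routine) commutation of the substitution $\Subst{\cdot}{\alpha}{\ctyp}$ with the type constructors, which the paper leaves implicit.
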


\begin{proof}
  We prove this using induction on the derivation of $\LeadsTo{P}{Q}{v}{\tau}{\Valid}$.
  Consider the last rule applied.
  \begin{itemize}
    \item \ref{rule:i-base-valid}: \\
          Trivial since we have that $v = w$, $\tau = \beta$ and \HasTypeInCtx{w}{\beta}.
    \item \ref{rule:i-abs-valid}: \\
          Trivial since we have that $\tau = \alpha$ and \HasTypeInCtx{v}{\ctyp}.
    \item \ref{rule:i-prod-valid}: \\
          We have that
          \begin{enumerate}[leftmargin=1cm,label={\ttfamily(\alph*)}]
            \item $v = \Tuple{v_1}{v_2}$,
            \item $\tau = \TTuple{\tau_1}{\tau_2}$
            \item \LeadsTo{P}{Q}{v_1}{\tau_1}{\Valid}, and
            \item \LeadsTo{P}{Q}{v_2}{\tau_2}{\Valid}.
          \end{enumerate}
          Then, \HasTypeInCtx{v_1}{\Subst{\tau_1}{\alpha}{\ctyp}}
          and \HasTypeInCtx{v_2}{\Subst{\tau_2}{\alpha}{\ctyp}} by induction.
          Hence,
          $$\HasTypeInCtx{\Tuple{v_1}{v_2}}{\Subst{\TTuple{\tau_1}{\tau_2}}{\alpha}{\ctyp}}$$
          and the claim follows.
    \item \ref{rule:i-fun-valid}: \\
          Trivial since we have that
          \HasTypeInCtx{v}{\Subst{\TFun{\sigma_1}{\tau_2}}{\alpha}{\ctyp}}
          and $\tau = \TFun{\sigma_1}{\tau_2}$.
  \end{itemize}\vspace*{-1.6em}
\end{proof}

\begin{lemma}
  If \LeadsToT{P}{Q}{v}{\tau}{S}{V} then \HasTypeInCtx{v}{\Subst{\tau}{\alpha}{\ctyp}}.
\end{lemma}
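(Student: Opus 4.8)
The plan is to mirror the proof of the preceding lemma exactly: I would proceed by induction on the derivation of $\LeadsToT{P}{Q}{v}{\tau}{S}{V}$ and case-split on the last rule applied. The counterexample judgment has four rules, and in each the typing conclusion $\HasTypeInCtx{v}{\Subst{\tau}{\alpha}{\ctyp}}$ is either handed to us directly by a premise or assembled from the premises of that rule, so no deep reasoning is required.

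First I would dispatch the two immediate cases. For \ref{rule:i-abs-cex} we have $\tau = \alpha$ and the rule carries the premise $\HasTypeInCtx{v}{\ctyp}$; since $\Subst{\alpha}{\alpha}{\ctyp} = \ctyp$, the conclusion is exactly the premise. For \ref{rule:i-fun-cex} we have $\tau = \TFun{\sigma_1}{\tau_2}$ and the rule again carries the well-typedness premise $\HasTypeInCtx{v}{\Subst{\TFun{\sigma_1}{\tau_2}}{\alpha}{\ctyp}}$ verbatim, so nothing further is needed.

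Next I would handle the two product cases, which are the only places the induction hypothesis is invoked. In both \ref{rule:i-prod-cex-1} and \ref{rule:i-prod-cex-2} we have $v = \Tuple{v_1}{v_2}$ and $\tau = \TTuple{\tau_1}{\tau_2}$, and each rule supplies exactly one counterexample subderivation and one typing premise. In \ref{rule:i-prod-cex-1}, the induction hypothesis applied to $\LeadsToT{P}{Q}{v_1}{\tau_1}{S}{V}$ yields $\HasTypeInCtx{v_1}{\Subst{\tau_1}{\alpha}{\ctyp}}$, while the premise directly gives $\HasTypeInCtx{v_2}{\Subst{\tau_2}{\alpha}{\ctyp}}$; rule \ref{rule:i-prod-cex-2} is symmetric, with the roles of $v_1$ and $v_2$ swapped. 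Applying the product typing rule then gives $\HasTypeInCtx{\Tuple{v_1}{v_2}}{\Subst{\TTuple{\tau_1}{\tau_2}}{\alpha}{\ctyp}}$, as desired.

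I do not expect any genuine obstacle here; the content of the lemma lies entirely in the design of the rules in \cref{fig:rules}, which were deliberately annotated with typing side-conditions on the ``non-refuted'' component precisely so that type safety is preserved even though only one branch of a pair carries the counterexample derivation. The only point demanding care is keeping straight, in the two product cases, which component's typing comes from the induction hypothesis and which from the explicit premise. I would also note that the valid-judgment subderivation $\LeadsTo{Q}{P}{v_1}{\sigma_1}{\Valid}$ appearing inside \ref{rule:i-fun-cex} is never needed for this argument, since the function's type is asserted outright by its own premise.
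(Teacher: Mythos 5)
Your proposal is correct and follows essentially the same route as the paper's proof: induction on the derivation of $\LeadsToT{P}{Q}{v}{\tau}{S}{V}$, with the \ref{rule:i-abs-cex} and \ref{rule:i-fun-cex} cases discharged immediately by their explicit typing premises, and the two product cases handled by applying the induction hypothesis to the refuted component and the typing premise to the other. Your closing remark that the subderivation $\LeadsTo{Q}{P}{v_1}{\sigma_1}{\Valid}$ in \ref{rule:i-fun-cex} is never needed is accurate, and matches the paper's treatment of that case as trivial.
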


\begin{proof}
  We prove this using induction on the derivation of $\LeadsToT{P}{Q}{v}{\tau}{S}{V}$.
  Consider the last rule applied.
  \begin{itemize}
    \item \ref{rule:i-abs-cex}: \\
          Trivial since we have that $\tau = \alpha$ and $\{ \}\vdash \HasType{v}{\ctyp}$.
    \item \ref{rule:i-prod-cex-1}: \\
          We have that
          \begin{enumerate}[leftmargin=1cm,label={\ttfamily(\alph*)}]
            \item $v = \Tuple{v_1}{v_2}$,
            \item $\tau = \TTuple{\tau_1}{\tau_2}$
            \item \LeadsToT{P}{Q}{v_1}{\tau_1}{S}{V}, and
            \item \HasTypeInCtx{v_2}{\Subst{\tau_2}{\alpha}{\ctyp}}.
          \end{enumerate}
          Then, \HasTypeInCtx{v_1}{\Subst{\tau_1}{\alpha}{\ctyp}} by induction.
          Hence,
          $$\HasTypeInCtx{\Tuple{v_1}{v_2}}{\Subst{\TTuple{\tau_1}{\tau_2}}{\alpha}{\ctyp}}$$
          and the claim follows.
    \item \ref{rule:i-prod-cex-2}: \\
          Similar to \ref{rule:i-prod-cex-1} above.
    \item \ref{rule:i-fun-cex}: \\
          Trivial since we have that
          \HasTypeInCtx{v}{\Subst{\TFun{\sigma_1}{\tau_2}}{\alpha}{\ctyp}}
          and $\tau = \TFun{\sigma_1}{\tau_2}$.
  \end{itemize}\vspace*{-1.6em}
\end{proof}

\subsection{Soundness of \RepInvGen}

\begin{lemma}\label{lem:NoNegatives-sound}
  If \:\verifier\ is sound, then
  $$
    \Apply{\NoNegatives}{\inv} = \Valid
    \implies
    \Satisfies{\module}{\interface}{\inv}{\spec}
  $$
\end{lemma}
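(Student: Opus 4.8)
The plan is to reason by direct inspection of the control flow of \NoNegatives\ in \cref{fig:repinvgen-algos} together with \cref{def:sufficient-rep-inv}, which requires establishing two conjuncts: sufficiency \Sufficient[\spec][\module]{\inv} and full inductiveness \LeadsTo{\inv}{\inv}{\mval}{\mtyp}{\Valid}. First I would observe that the only path through \NoNegatives\ that yields \Valid\ is the one in which the outer match on the sufficiency check (\cref{line:sufficiency-check}) takes its \Valid\ branch \emph{and} the inner match on the inductiveness check (\cref{line:inductiveness-check}) also takes its \Valid\ branch. Hence from the hypothesis $\Apply{\NoNegatives}{\inv} = \Valid$ we may assume both $\Apply{\verifier}{\Sufficient[\spec][\module]{\inv}} = \Valid$ and $\Apply{\IsCondInductive}{\inv}{\inv} = \Valid$.

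For the first conjunct I would appeal to the soundness of \verifier. Reading \Sufficient[\spec][\module]{\inv} as the Boolean predicate $\LambdaF{v}{\ctyp}{\Apply{\inv}{v} \implies \Apply{\spec\,[\ctyp]}{\mval}{v}}$ handed to the verifier, soundness gives that this predicate evaluates to \CF{true} on every value of type \ctyp; by \cref{def:sufficiency} this is exactly \Sufficient[\spec][\module]{\inv}. For the second conjunct I would use the specification of \IsCondInductive: by definition $\Apply{\IsCondInductive}{P}{Q}$ returns the $R$ with \LeadsTo{P}{Q}{\mval}{\mtyp}{R}, so $\Apply{\IsCondInductive}{\inv}{\inv} = \Valid$ means precisely that \LeadsTo{\inv}{\inv}{\mval}{\mtyp}{\Valid} holds. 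Combining the two conjuncts through \cref{def:sufficient-rep-inv} then yields \Satisfies{\module}{\interface}{\inv}{\spec}, as required.

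Since the argument is essentially a case analysis followed by two invocations of definitions, there is no deep obstacle; the only point requiring care is keeping the roles of the two checks straight. The inductiveness conjunct is \emph{definitional} --- \IsCondInductive\ is taken to compute the conditional-inductiveness judgment of \cref{fig:rules} faithfully --- so verifier soundness is needed \emph{only} for the sufficiency conjunct, not the inductiveness one. I would also make explicit the mild notational identification of \Sufficient[\spec][\module]{\inv} (a proposition in \cref{def:sufficiency}) with the predicate actually consumed by \verifier, so that verifier soundness delivers exactly the statement of \cref{def:sufficiency} rather than something weaker.
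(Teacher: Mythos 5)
Your proposal is correct and follows essentially the same route as the paper's own proof: inspect the control flow of \NoNegatives, note that \Valid\ is returned only when both the sufficiency check (\cref{line:sufficiency-check}) and the inductiveness check (\cref{line:inductiveness-check}) succeed, invoke the soundness of \verifier\ for the former and the definitional faithfulness of \IsCondInductive\ for the latter, and conclude via \cref{def:sufficient-rep-inv}. The paper states this in three terse sentences; your version merely spells out the same steps, including the (correct) observation that verifier soundness is only needed for the sufficiency conjunct.
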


\begin{proof}
  We explicitly check for sufficiency at \cref{line:sufficiency-check},
  and for inductiveness at \cref{line:inductiveness-check}.
  \NoNegatives\ returns \Valid\ only if both these checks succeed.
  Since \verifier\ is assumed to be sound, the claim immediately follows.
\end{proof}

\forcenumber{3}{9}
\begin{theorem}\label{apx.thm:RepInvGen-sound}
  If \:\verifier\ is sound, then \RepInvGen\ is sound.
\end{theorem}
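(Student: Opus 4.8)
The plan is to reduce soundness to \cref{lem:NoNegatives-sound}, which already does the real work. Recall that soundness (\cref{def:soundness}) asks only that \emph{any predicate returned} by \RepInvGen\ $\emptyset$ $\emptyset$ be a sufficient representation invariant; it is a partial-correctness statement, so I do not need termination here (that is handled separately by the rank function used for completeness). First I would inspect the control flow of \RepInvGen\ in \cref{fig:repinvgen-algos} and enumerate its exit points. There are exactly five: the two \texttt{failwith} branches (which raise rather than return a predicate), the two tail-recursive calls at \cref{line:positive-recurse} and \cref{line:negative-recurse}, and the single branch at \cref{line:repinvgen-result} that returns a candidate \inv. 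The two recursive calls do not construct a result themselves; they simply propagate whatever the recursive invocation yields, so every \emph{value} ever returned by \RepInvGen\ ultimately originates at \cref{line:repinvgen-result}.

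Second, I would make this precise by induction on the depth of the recursion (equivalently, a well-founded induction using the lexicographic measure $\rank(\Vpos,\Vneg)$ already introduced for termination, though any well-founded measure on the call tree suffices). The induction hypothesis is: for all \Vpos, \Vneg, if \RepInvGen\ \Vpos\ \Vneg\ returns a predicate $P$, then $P$ is some candidate \inv\ that was returned at \cref{line:repinvgen-result}, i.e.\ for which \NoNegatives\ \inv\ evaluated to \Valid. The base case is the direct return at \cref{line:repinvgen-result}; the two recursive cases follow immediately, since in each the returned value is exactly the value of the recursive call, to which the induction hypothesis applies.

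Finally, I would close the argument by invoking \cref{lem:NoNegatives-sound}: since \verifier\ is sound by hypothesis, $\Apply{\NoNegatives}{\inv} = \Valid$ implies \Satisfies{\module}{\interface}{\inv}{\spec}. Combining this with the structural fact above, any predicate returned by \RepInvGen\ $\emptyset$ $\emptyset$ is a sufficient representation invariant, which is exactly soundness in the sense of \cref{def:soundness}.

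There is no genuine obstacle here: the substance of soundness is entirely contained in \cref{lem:NoNegatives-sound}, which in turn rests on the explicit sufficiency check at \cref{line:sufficiency-check} and the inductiveness check at \cref{line:inductiveness-check}, together with the assumed soundness of \verifier. The only thing requiring care is the bookkeeping that a returned value is never fabricated by the recursive machinery but always flows out of the guarded \cref{line:repinvgen-result} branch --- a routine induction on the call structure that I would state but not belabor.
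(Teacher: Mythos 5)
Your proposal is correct and follows essentially the same route as the paper: observe that a predicate is only ever returned at \cref{line:repinvgen-result}, where $\Apply{\NoNegatives}{\inv} = \Valid$ has been established, and then apply \cref{lem:NoNegatives-sound}. The explicit induction on recursion depth that you use to justify ``every returned value originates at the guarded return'' is bookkeeping the paper leaves implicit, but it is the same argument.
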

\resetnumber

\begin{proof}
  Since \RepInvGen\ returns an invariant \inv\ only at \cref{line:repinvgen-result},
  when $\Apply{\NoNegatives}{\inv} = \Valid$,
  the claim immediately follows from \cref{lem:NoNegatives-sound}.
\end{proof}

\subsection{Completeness of \RepInvGen}

\begin{lemma}\label{lem:triangle-reverse-Constructible}
  $ \Constructible{Q}{\alpha} \wedge \LeadsTo{P}{Q}{v}{\sigma}{\Valid}
    \implies
    \Constructible{v}{\sigma} $.
\end{lemma}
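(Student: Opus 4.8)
The plan is to proceed by structural induction on the derivation of $\LeadsTo{P}{Q}{v}{\sigma}{\Valid}$. The key observation that organizes the proof is that $\sigma$ ranges only over 0-order types ($\beta$, $\alpha$, and products thereof), so the last rule in the derivation must be one of \ref{rule:i-base-valid}, \ref{rule:i-abs-valid}, or \ref{rule:i-prod-valid}. In particular, the function rule \ref{rule:i-fun-valid} can never be the last rule applied, since its conclusion carries a function type, which is excluded from the 0-order grammar. This leaves exactly three cases, and the inductive hypothesis will assert the implication for all strictly smaller derivations.

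The base case \ref{rule:i-base-valid} is immediate: here $v = \W$ and $\sigma = \beta$, and $\W$ is $\beta$-constructible via the constant function $\HasType{f}{\Forall{\alpha}{\TFun{\mtyp}{\beta}}}$ that discards the module and returns $\W$, so that $\EvaluatesTo{\Apply{f[\ctyp]}{\mval}}{\W}$; the hypothesis $\Constructible{Q}{\alpha}$ plays no role. The crux of the argument is the abstract-type case \ref{rule:i-abs-valid}. Here $\sigma = \alpha$, $\HasTypeInCtx{v}{\ctyp}$, and the premise supplies $\Apply{Q}{v}$, that is, $v$ satisfies $Q$. By the hypothesis $\Constructible{Q}{\alpha}$ together with \cref{def:constructible-predicate}, every value satisfying $Q$ is $\alpha$-constructible; instantiating this at $v$ yields $\Constructible{v}{\alpha}$ directly. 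This is the single step that genuinely drives the lemma, and I expect it to be the only place where any real content is used rather than routine bookkeeping.

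For the product case \ref{rule:i-prod-valid}, we have $v = \Tuple{v_1}{v_2}$ and $\sigma = \TTuple{\sigma_1}{\sigma_2}$, with premises $\LeadsTo{P}{Q}{v_1}{\sigma_1}{\Valid}$ and $\LeadsTo{P}{Q}{v_2}{\sigma_2}{\Valid}$. Applying the induction hypothesis to each premise gives $\Constructible{v_1}{\sigma_1}$ and $\Constructible{v_2}{\sigma_2}$, witnessed by polymorphic functions $f_1$ and $f_2$ of types $\Forall{\alpha}{\TFun{\mtyp}{\sigma_1}}$ and $\Forall{\alpha}{\TFun{\mtyp}{\sigma_2}}$ respectively. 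I would then assemble a combined witness $f$ of type $\Forall{\alpha}{\TFun{\mtyp}{\TTuple{\sigma_1}{\sigma_2}}}$ whose application $\Apply{f[\ctyp]}{\mval}$ pairs $\Apply{f_1[\ctyp]}{\mval}$ with $\Apply{f_2[\ctyp]}{\mval}$, so that $\EvaluatesTo{\Apply{f[\ctyp]}{\mval}}{\Tuple{v_1}{v_2}}$ and hence $\Constructible{\Tuple{v_1}{v_2}}{\TTuple{\sigma_1}{\sigma_2}}$ follows from \cref{def:constructible}. The only care required here is in defining the combined witness and checking its evaluation behavior, which is entirely mechanical; no genuine obstacle arises, and the restriction to 0-order types is precisely what spares us from having to reason about constructing function values or about the contravariant swap of $P$ and $Q$ that appears in the higher-order rules.
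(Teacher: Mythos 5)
Your proof is correct and follows essentially the same route as the paper's: induction on the derivation, with the observation that only the rules \ref{rule:i-base-valid}, \ref{rule:i-abs-valid}, and \ref{rule:i-prod-valid} can conclude at a 0-order type $\sigma$, the abstract-type case discharged by \cref{def:constructible-predicate}, and the product case by combining the inductive hypotheses. The only difference is presentational: you spell out the constructibility witnesses (the constant function for $\beta$ and the paired witness for products) that the paper leaves implicit.
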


\begin{proof}
  We prove this using induction on the derivation of $\LeadsTo{P}{Q}{v}{\sigma}{\Valid}$.
  Consider the last rule applied.
  \begin{itemize}
    \item \ref{rule:i-base-valid}: \\
          Trivial since values of type $\beta$ are always Constructible.
    \item \ref{rule:i-abs-valid}: \\
          Then $\sigma = \alpha$ and $\Apply{Q}{v}$.
          Then since \Constructible{Q}{\alpha},
          the claim follows immediately from \cref{def:constructible-predicate}.
    \item \ref{rule:i-prod-valid}: \\
          We have that
          \begin{enumerate}[leftmargin=1cm,label={\ttfamily(\alph*)}]
            \item $\sigma = \TTuple{\sigma_1}{\sigma_2}$,
            \item $v = \Tuple{v_1}{v_2}$,
            \item $\LeadsTo{P}{Q}{v_1}{\sigma_1}{\Valid}$, and
            \item $\LeadsTo{P}{Q}{v_2}{\sigma_2}{\Valid}$.
          \end{enumerate}
          Then by induction we have that \Constructible{v_1}{\sigma_1} and \Constructible{v_2}{\sigma_2},
          hence also \Constructible{v}{\sigma}.
  \end{itemize}\vspace*{-1.6em}
\end{proof}

\begin{lemma}\label{lem:triangle-collectv}
  $ \LeadsTo{P}{Q}{v}{\sigma}{\Valid}
    \ \implies\
    \Forall{x \in \CollectV{\sigma}{v}}{\Apply{Q}{x}} $.
\end{lemma}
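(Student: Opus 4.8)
The plan is to prove this by structural induction on the derivation of $\LeadsTo{P}{Q}{v}{\sigma}{\Valid}$, casing on the last rule applied, exactly as in the proof of \cref{lem:triangle-reverse-Constructible}. The observation that keeps the argument short is that $\sigma$ is a \emph{0-type}, so it is one of $\beta$, $\alpha$, or a product $\TTuple{\sigma_1}{\sigma_2}$ of 0-types. The function rule \ref{rule:i-fun-valid} concludes at a 1-type $\TFun{\sigma_1}{\tau_2}$ and therefore cannot be the last rule here, leaving only \ref{rule:i-base-valid}, \ref{rule:i-abs-valid}, and \ref{rule:i-prod-valid}. In each case I would pair the applicable inductiveness rule with the matching defining clause of $\CollectV{\sigma}{v}$ drawn from \ref{rule:c-base}, \ref{rule:c-abs}, and \ref{rule:c-prod}.

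Concretely: for \ref{rule:i-base-valid} we have $\sigma = \beta$ and $v = \W$, so $\CollectV{\beta}{\W} = \{\}$ by \ref{rule:c-base} and the goal holds vacuously. For \ref{rule:i-abs-valid} we have $\sigma = \alpha$ with premise $\Apply{Q}{v}$, and $\CollectV{\alpha}{v} = \{v\}$ by \ref{rule:c-abs}, so the unique element $v$ satisfies $Q$ directly. For \ref{rule:i-prod-valid} we have $\sigma = \TTuple{\sigma_1}{\sigma_2}$, $v = \Tuple{v_1}{v_2}$, and subderivations $\LeadsTo{P}{Q}{v_1}{\sigma_1}{\Valid}$ and $\LeadsTo{P}{Q}{v_2}{\sigma_2}{\Valid}$; since $\CollectV{\sigma}{v} = \CollectV{\sigma_1}{v_1} \cup \CollectV{\sigma_2}{v_2}$ by \ref{rule:c-prod}, the induction hypotheses on the two subderivations give that every element of each component collection satisfies $Q$, hence so does every element of their union.

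I do not expect a real obstacle: this is a routine three-case structural induction. The only subtlety worth noting is the implicit domain restriction --- $\CollectV{\cdot}{\cdot}$ is defined only on 0-types --- which is precisely what rules out the function case and lets the induction close without any reasoning about function values or about the negative-position property $P$, which never enters the positive cases of a 0-type.
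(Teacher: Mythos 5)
Your proposal is correct and follows essentially the same route as the paper's proof: a structural induction on the derivation with exactly the three cases \ref{rule:i-base-valid}, \ref{rule:i-abs-valid}, and \ref{rule:i-prod-valid}, each paired with the corresponding clause \ref{rule:c-base}, \ref{rule:c-abs}, or \ref{rule:c-prod} of $\CollectV{\sigma}{v}$. Your explicit remark that the 0-type restriction on $\sigma$ rules out the function case is a point the paper leaves implicit, but it changes nothing in substance.
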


\begin{proof}
  We prove this using induction on the derivation of $\LeadsTo{P}{Q}{v}{\sigma}{\Valid}$.
  Consider the last rule applied.
  \begin{itemize}
    \item \ref{rule:i-base-valid}: \\
          Trivial since $\CollectV{\beta}{\W} = \emptyset$.
    \item \ref{rule:i-abs-valid}: \\
          We have $\HasTypeInCtx{v}{\ctyp}$ and $\Apply{Q}{v}$.
          Since $\sigma = \alpha$, due to \ref{rule:c-abs}
          we also have $\CollectV{\alpha}{v} = \{v\}$.
          Then the claim follows.
    \item \ref{rule:i-prod-valid}: \\
          We have that
          \begin{enumerate}[leftmargin=1cm,label={\ttfamily(\alph*)}]
            \item $\sigma = \TTuple{\sigma_1}{\sigma_2}$,
            \item $v = \Tuple{v_1}{v_2}$,
            \item $\LeadsTo{P}{Q}{v_1}{\sigma_1}{\Valid}$, and
            \item $\LeadsTo{P}{Q}{v_2}{\sigma_2}{\Valid}$.
          \end{enumerate}
          Since $\CollectV{\TTuple{\sigma_1}{\sigma_2}}{\Tuple{v_1}{v_2}} = \CollectV{\sigma_1}{v_1} \cup \CollectV{\sigma_2}{v_2}$
          by \ref{rule:c-prod}, the claim follows by induction.
  \end{itemize}\vspace*{-1.6em}
\end{proof}

\begin{lemma}\label{lem:collectv-triangle}
  $ \Forall{x \in \CollectV{\sigma}{v}}{\Apply{Q}{x}}
    \implies
    \LeadsTo{P}{Q}{v}{\sigma}{\Valid} $.
\end{lemma}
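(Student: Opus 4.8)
The plan is to prove this by structural induction on the 0-type $\sigma$, equivalently on the derivation of $\CollectV{\sigma}{v}$. The key simplification is that $\sigma$ ranges only over 0-types, so it is one of $\beta$, $\alpha$, or $\TTuple{\sigma_1}{\sigma_2}$; in particular there is no arrow case, so the \ref{rule:i-fun-valid} rule---the only validity rule that mentions $P$---is never reached. The three remaining cases line up one-to-one with the clauses \ref{rule:c-base}, \ref{rule:c-abs}, \ref{rule:c-prod} defining $\CollectV{\sigma}{v}$ and with the validity rules \ref{rule:i-base-valid}, \ref{rule:i-abs-valid}, \ref{rule:i-prod-valid}, so the lemma is essentially the exact converse of \cref{lem:triangle-collectv}, read off the same rules.

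First I would dispatch $\sigma = \beta$: here $v = \W$ and $\CollectV{\beta}{\W} = \emptyset$ by \ref{rule:c-base}, so the hypothesis is vacuously true and \ref{rule:i-base-valid} gives $\LeadsTo{P}{Q}{\W}{\beta}{\Valid}$ unconditionally. Next, for $\sigma = \alpha$, canonical forms give $\HasTypeInCtx{v}{\ctyp}$, and \ref{rule:c-abs} yields $\CollectV{\alpha}{v} = \{v\}$; the hypothesis $\Forall{x \in \CollectV{\alpha}{v}}{\Apply{Q}{x}}$ then says precisely $\Apply{Q}{v}$, so \ref{rule:i-abs-valid} applies and delivers the goal.

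For the product case $\sigma = \TTuple{\sigma_1}{\sigma_2}$, canonical forms give $v = \Tuple{v_1}{v_2}$, and \ref{rule:c-prod} splits the collected set as $\CollectV{\sigma}{v} = \CollectV{\sigma_1}{v_1} \cup \CollectV{\sigma_2}{v_2}$. The universal hypothesis over this union restricts to a universal hypothesis over each component set, so the two induction hypotheses yield $\LeadsTo{P}{Q}{v_1}{\sigma_1}{\Valid}$ and $\LeadsTo{P}{Q}{v_2}{\sigma_2}{\Valid}$; combining them with \ref{rule:i-prod-valid} gives the result.

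I expect no substantive obstacle here, since the argument is a routine structural induction that simply mirrors the definition of $\CollectV{\sigma}{v}$. The only points requiring mild care are invoking canonical forms to fix the shape of $v$ from $\sigma$ at each step (so that both $\CollectV{\sigma}{v}$ and the validity judgment are applicable), and noting explicitly that the parameter $P$ is irrelevant at 0-types because the function-type rule is never invoked---which is why the statement holds uniformly for every $P$.
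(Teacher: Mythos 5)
Your proof is correct and takes essentially the same approach as the paper: the paper also proceeds by induction on the derivation of $\CollectV{\sigma}{v}$ (equivalently, on the structure of the 0-type $\sigma$), handling the \ref{rule:c-base}, \ref{rule:c-abs}, and \ref{rule:c-prod} cases exactly as you do via \ref{rule:i-base-valid}, \ref{rule:i-abs-valid}, and \ref{rule:i-prod-valid} respectively. Your added remarks---that the absence of an arrow case makes $P$ irrelevant, and that canonical forms fix the shape of $v$---are correct glosses on the same argument, not a different route.
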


\begin{proof}
  We prove this using induction on the derivation of $\CollectV{\sigma}{v}$.
  Consider the last rule applied:
  \begin{itemize}
    \item \ref{rule:c-base}: \\
          Trivial since $\CollectV{\beta}{\W} = \emptyset$.
    \item \ref{rule:c-abs}: \\
          We have that
          \begin{enumerate}[leftmargin=1cm,label={\ttfamily(\alph*)}]
            \item $\sigma = \alpha$,
            \item $\HasTypeInCtx{v}{\ctyp}$, and
            \item $\CollectV{\alpha}{v} = \{v\}$.
          \end{enumerate}
          We also have $\Apply{Q}{v}$ by assumption.
          Then, we have that $\LeadsTo{P}{Q}{v}{\alpha}{\Valid}$ by \ref{rule:i-abs-valid}.
    \item \ref{rule:c-prod}: \\
          We have that $\CollectV{\TTuple{\sigma_1}{\sigma_2}}{\Tuple{v_1}{v_2}} = \CollectV{\sigma_1}{v_1} \cup \CollectV{\sigma_2}{v_2}$.
          Also, since $\Forall{x \in \CollectV{\sigma_1}{v_1} \cup \CollectV{\sigma_2}{v_2}}{\Apply{Q}{x}}$ by assumption,
          we have that $\LeadsTo{P}{Q}{v_1}{\sigma_1}{\Valid}$ and $\LeadsTo{P}{Q}{v_2}{\sigma_2}{\Valid}$ by induction.
          Then, $\LeadsTo{P}{Q}{\Tuple{v_1}{v_2}}{\TTuple{\sigma_1}{\sigma_2}}{\Valid}$ follows by \ref{rule:i-prod-valid}.
  \end{itemize}\vspace*{-1.6em}
\end{proof}

\begin{lemma}\label{lem:triangle-counterexample}
  If $\LeadsToT{P}{Q}{v}{\tau}{S}{V}$, then
  \begin{inlist}
    \item $\Forall{x \in S}{\Apply{P}{x}}$,
    \item $V \neq \emptyset$, and
    \item $\Forall{x \in V}{\neg\Apply{Q}{x}}$.
  \end{inlist}
\end{lemma}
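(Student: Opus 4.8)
The plan is to proceed by induction on the derivation of $\LeadsToT{P}{Q}{v}{\tau}{S}{V}$, case-splitting on the last rule applied. Exactly four rules can conclude a counterexample judgment, so I would treat each in turn, checking the three conjuncts (1)--(3) simultaneously within each case.

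In the base case \ref{rule:i-abs-cex} we have $\tau = \alpha$, $S = \{\}$, and $V = \{v\}$ with $\neg\Apply{Q}{v}$ among the premises. Part (1) is vacuous since $S$ is empty; part (2) holds since $V$ is a singleton; and part (3) holds since the sole element of $V$ falsifies $Q$ by the premise. The two product cases \ref{rule:i-prod-cex-1} and \ref{rule:i-prod-cex-2} are equally routine: in each, the pair inherits $S$ and $V$ verbatim from the counterexample subderivation on one component (the other component contributes only a typing premise), so all three parts follow directly from the induction hypothesis on that subderivation.

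The only case requiring real work, and the one I expect to be the crux, is the function rule \ref{rule:i-fun-cex}. Its conclusion returns $S = \CollectV{\sigma_1}{v_1} \cup S'$ and $V = V'$, where the subderivation is $\LeadsToT{P}{Q}{v_2}{\tau_2}{S'}{V'}$ and $v_2$ is the result of applying $v$ to an argument $v_1$ satisfying the validity premise $\LeadsTo{Q}{P}{v_1}{\sigma_1}{\Valid}$. Parts (2) and (3) concern only $V = V'$, which is inherited unchanged, so both follow immediately from the induction hypothesis. For part (1), elements of $S'$ satisfy $P$ by the induction hypothesis, so it remains to show that every $x \in \CollectV{\sigma_1}{v_1}$ satisfies $P$ as well. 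Here I would invoke Lemma~\ref{lem:triangle-collectv}, which states that a valid value's collected abstract-type components all satisfy the \emph{second} predicate of its validity judgment. The subtlety is that the function rule validates its argument under the judgment with $P$ and $Q$ \emph{swapped}---the premise is $\LeadsTo{Q}{P}{v_1}{\sigma_1}{\Valid}$, not $\LeadsTo{P}{Q}{v_1}{\sigma_1}{\Valid}$---so applying Lemma~\ref{lem:triangle-collectv} to this premise yields that the collected values satisfy $P$ (the second predicate in this instance), exactly as needed. Taking the union with $S'$ then establishes part (1), completing the induction. The whole argument is a structural induction whose only nonroutine step is keeping track of this polarity swap in the function case; no finiteness or semantic reasoning beyond Lemma~\ref{lem:triangle-collectv} is required.
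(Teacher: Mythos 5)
Your proof is correct and takes essentially the same route as the paper's: structural induction on the derivation of $\LeadsToT{P}{Q}{v}{\tau}{S}{V}$, with the trivial base and product cases, and the function case resolved by applying Lemma~\ref{lem:triangle-collectv} to the swapped-polarity premise $\LeadsTo{Q}{P}{v_1}{\sigma_1}{\Valid}$ to conclude that the collected argument values satisfy $P$. Your explicit attention to the $P$/$Q$ swap is exactly the one nonroutine point in the paper's argument as well.
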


\begin{proof}
  We prove this using induction on the derivation of $\LeadsToT{P}{Q}{v}{\tau}{S}{V}$.
  Consider the last rule applied.
  \begin{itemize}
    \item \ref{rule:i-abs-cex}: \\
          Straightforward since $S = \emptyset$, $V = \{ v \}$.
    \item \ref{rule:i-prod-cex-1} and \ref{rule:i-prod-cex-2}: \\
          Follow immediately from the induction hypothesis.
    \item \ref{rule:i-fun-cex}: \\
          We have that
          \begin{enumerate}[leftmargin=1cm,label={\ttfamily(\alph*)}]
            \item $\LeadsTo{Q}{P}{v_1}{\sigma_1}{\Valid}$, and
            \item $\LeadsToT{P}{Q}{v_2}{\tau_2}{S'}{V}$.
          \end{enumerate}
          Then, (2) and (3) are satisfied by the induction hypothesis.
          Due to \cref{lem:triangle-collectv}, we have that $\Forall{x \in \CollectV{\sigma_1}{v_1}}{\Apply{P}{x}}$.
          Then, since $\Forall{x \in S'}{\Apply{P}{x}}$ from the induction hypothesis,
          we have that $\Forall{x \in \CollectV{\sigma_1}{v_1} \cup S'}{\Apply{P}{x}}$.
          Hence, (1) is satisfied as well since $S = \CollectV{\sigma_1}{v_1} \cup S'$.
  \end{itemize}\vspace*{-1.6em}
\end{proof}

\begin{lemma}\label{lem:triangle-constructible}
  If $\Constructible{v}{\tau} \wedge \Constructible{P}{\alpha} \wedge \LeadsToT{P}{Q}{v}{\tau}{S}{V}$
  then $\Constructible{V}{\alpha}$.
\end{lemma}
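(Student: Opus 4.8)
The plan is to proceed by induction on the derivation of $\LeadsToT{P}{Q}{v}{\tau}{S}{V}$, with a case analysis on the last rule applied. The argument rests on two elementary facts about constructibility that follow directly from \cref{def:constructible}, since the witnessing functions can be composed with the expression formers of our calculus: (i) \emph{projection} --- if $\Tuple{v_1}{v_2}$ is $\TTuple{\tau_1}{\tau_2}$-constructible then each $v_i$ is $\tau_i$-constructible, obtained by post-composing the witness with $\Proj{i}{\cdot}$; and (ii) \emph{application} --- if $v$ is $\TFun{\sigma_1}{\tau_2}$-constructible and $v_1$ is $\sigma_1$-constructible, then $\Apply{v}{v_1}$ is $\tau_2$-constructible, obtained by applying the first witness to the second. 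I will treat both as established facts.

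For the base case \ref{rule:i-abs-cex}, we have $\tau = \alpha$, $S = \{\}$, and $V = \{v\}$, and $v$ is $\alpha$-constructible by hypothesis, so $\Constructible{V}{\alpha}$ holds immediately. The product cases \ref{rule:i-prod-cex-1} and \ref{rule:i-prod-cex-2} are handled uniformly: since $v = \Tuple{v_1}{v_2}$ is $\TTuple{\tau_1}{\tau_2}$-constructible, fact (i) gives constructibility of whichever component $v_i$ carries the sub-derivation, and the induction hypothesis then yields $\Constructible{V}{\alpha}$.

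The crux is the function rule \ref{rule:i-fun-cex}, where $\tau = \TFun{\sigma_1}{\tau_2}$ and the premises supply $\LeadsTo{Q}{P}{v_1}{\sigma_1}{\Valid}$, $\EvaluatesTo{\Apply{v}{v_1}}{v_2}$, and a sub-derivation $\LeadsToT{P}{Q}{v_2}{\tau_2}{S'}{V}$. To apply the induction hypothesis to this sub-derivation, I first need $v_2$ to be $\tau_2$-constructible. The key move is to apply \cref{lem:triangle-reverse-Constructible} to the premise $\LeadsTo{Q}{P}{v_1}{\sigma_1}{\Valid}$ \emph{with the roles of $P$ and $Q$ interchanged}: under that matching, the lemma's hypothesis is exactly $\Constructible{P}{\alpha}$, which I have assumed, so it concludes $\Constructible{v_1}{\sigma_1}$. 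Since $v$ is $\TFun{\sigma_1}{\tau_2}$-constructible (from the outer hypothesis) and $v_1$ is $\sigma_1$-constructible, fact (ii) gives $\Constructible{v_2}{\tau_2}$, and the induction hypothesis then delivers $\Constructible{V}{\alpha}$.

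I expect the main obstacle to be the predicate bookkeeping in the function case. Because the negative argument position inverts $P$ and $Q$, \cref{lem:triangle-reverse-Constructible} must be invoked on the swapped relation $\LeadsTo{Q}{P}{\cdot}{\cdot}{\Valid}$, and it is precisely the assumption $\Constructible{P}{\alpha}$ --- not $\Constructible{Q}{\alpha}$ --- that discharges its premise; getting this orientation right is what makes the hypothesis $\Constructible{P}{\alpha}$ essential to the statement. Everything else reduces to the two routine preservation facts above.
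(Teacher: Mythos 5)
Your proof is correct and follows essentially the same route as the paper's: induction on the derivation, with the product cases discharged by projection-constructibility and the function case discharged by applying \cref{lem:triangle-reverse-Constructible} to the swapped premise $\LeadsTo{Q}{P}{v_1}{\sigma_1}{\Valid}$ (so that its hypothesis is exactly $\Constructible{P}{\alpha}$), followed by the application fact and the induction hypothesis. In fact you spell out the $P$/$Q$ orientation and the two constructibility-preservation facts more explicitly than the paper does, which elides them as routine.
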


\begin{proof}
  We prove this using induction on the derivation of $\LeadsToT{P}{Q}{v}{\tau}{S}{V}$.
  Consider the last rule applied.
  \begin{itemize}
    \item \ref{rule:i-abs-cex}: \\
          Straightforward since we have $\alpha = \tau$ and $\Constructible[\module]{v}{\tau}$ holds by assumption.
    \item \ref{rule:i-prod-cex-1}: \\
          We have $\LeadsToT{P}{Q}{v_1}{\tau_1}{S}{V}$,
          and it is easy to show that $\Constructible{\Tuple{v_1}{v_2}}{\TTuple{\tau_1}{\tau_2}} \implies \Constructible{v_1}{\tau_1}$.
          Then the claim follows by induction.
    \item \ref{rule:i-prod-cex-2}: \\
          Similar to \ref{rule:i-prod-cex-1} above.
    \item \ref{rule:i-fun-cex}: \\
          We have that
          \begin{enumerate}[leftmargin=1cm,label={\ttfamily(\alph*)}]
            \item $\LeadsTo{Q}{P}{v_1}{\sigma_1}{\Valid}$, and
            \item $\LeadsToT{P}{Q}{v_2}{\tau_2}{S'}{V}$.
          \end{enumerate}
          Due to \cref{lem:triangle-reverse-Constructible}, we also have that $\Constructible{v_1}{\sigma_1}$.
          Hence $\Constructible{v_2}{\tau_2}$ since $\EvaluatesTo{\Apply{v}{v_1}}{v_2}$,
          and then the claim follows by induction.
  \end{itemize}\vspace*{-1.6em}
\end{proof}

\begin{lemma}\label{lem:triangle-valid}
  If $\LeadsTo{P}{Q}{v}{\tau}{\Valid} \wedge \LeadsToT{Q}{Q}{v}{\tau}{S}{V}$, then
  $\Exists{x \in S}{\neg\Apply{P}{x}}$.
\end{lemma}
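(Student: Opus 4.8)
The plan is to prove this by induction on the derivation of the counterexample judgment $\LeadsToT{Q}{Q}{v}{\tau}{S}{V}$, inverting the validity derivation $\LeadsTo{P}{Q}{v}{\tau}{\Valid}$ at each step. The intuition is that $v$ passes the conditional check where only the weaker obligation $P$ is imposed on negative (argument) positions, but fails full inductiveness, where $Q$ is demanded in those positions; the witness $x \in S$ with $\neg\Apply{P}{x}$ is exactly the abstract input responsible for this discrepancy.

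First, the base case \ref{rule:i-abs-cex} is vacuous: there $\tau = \alpha$ and the counterexample rule supplies $\neg\Apply{Q}{v}$, while inverting \ref{rule:i-abs-valid} on the validity hypothesis yields $\Apply{Q}{v}$, a contradiction, so the conclusion holds trivially. The product cases \ref{rule:i-prod-cex-1} and \ref{rule:i-prod-cex-2} are routine: inverting \ref{rule:i-prod-valid} gives validity of whichever component carries the counterexample, and the induction hypothesis applied to that component returns the desired $x \in S$ unchanged.

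The function case \ref{rule:i-fun-cex} is the crux. Here $\tau = \sigma_1 \to \tau_2$, and the rule provides an argument $v_1$ with $\LeadsTo{Q}{Q}{v_1}{\sigma_1}{\Valid}$, an evaluation $\EvaluatesTo{\Apply{v}{v_1}}{v_2}$, a subderivation $\LeadsToT{Q}{Q}{v_2}{\tau_2}{S'}{V}$, and $S = \CollectV{\sigma_1}{v_1} \cup S'$. I would case-split on whether $\LeadsTo{Q}{P}{v_1}{\sigma_1}{\Valid}$ holds. If it does, then inverting \ref{rule:i-fun-valid} on the validity hypothesis (whose premise demands exactly this judgment, since $P$ and $Q$ are swapped in negative position) yields $\LeadsTo{P}{Q}{v_2}{\tau_2}{\Valid}$; the induction hypothesis applied to $v_2$ then produces $x \in S' \subseteq S$ with $\neg\Apply{P}{x}$. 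If it does not hold, then since $\sigma_1$ is a $0$-type, validity on $\sigma_1$ depends only on the property imposed at its $\alpha$-leaves, so the contrapositive of \cref{lem:collectv-triangle} gives some $x \in \CollectV{\sigma_1}{v_1} \subseteq S$ with $\neg\Apply{P}{x}$.

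The main obstacle is this function case, and in particular recognizing that the witness can arise in two distinct ways: either the discrepancy is deferred into the result $v_2$ and recovered by the induction hypothesis, or it is already present in the current argument $v_1$, living in $\CollectV{\sigma_1}{v_1}$. The structural fact that makes the second branch go through is that on $0$-types the validity relation ignores its negative-position argument entirely (no function arrows occur inside a $0$-type, so \ref{rule:i-fun-valid} never fires). Thus $\LeadsTo{Q}{Q}{v_1}{\sigma_1}{\Valid}$ (which we have) and $\LeadsTo{Q}{P}{v_1}{\sigma_1}{\Valid}$ (which we need to invoke the validity hypothesis) differ only in whether the $\alpha$-leaves of $v_1$ must satisfy $Q$ or $P$; failure of the latter is therefore witnessed precisely by an $\alpha$-leaf in $\CollectV{\sigma_1}{v_1}$ violating $P$, which is exactly what \cref{lem:collectv-triangle} formalizes when read contrapositively.
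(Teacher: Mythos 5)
Your proof is correct and follows essentially the same route as the paper's: induction on the derivation of $\LeadsToT{Q}{Q}{v}{\tau}{S}{V}$, with the abstract-type case dismissed by contradiction, the product cases by inversion of \ref{rule:i-prod-valid} plus the induction hypothesis, and the function case split on whether $\LeadsTo{Q}{P}{v_1}{\sigma_1}{\Valid}$ holds, using the induction hypothesis in the positive branch and the contrapositive of \cref{lem:collectv-triangle} in the negative branch. Your added observation that $0$-types contain no arrows (so validity there is determined entirely by the $\alpha$-leaves collected by $\CollectV{\sigma_1}{v_1}$) is a correct elaboration of why \cref{lem:collectv-triangle} closes that branch, matching the paper's implicit reasoning.
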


\begin{proof}
  We prove this using induction on the derivation of $\LeadsToT{Q}{Q}{v}{\tau}{S}{V}$.
  Consider the last rule applied.
  \begin{itemize}
    \item \ref{rule:i-abs-cex}: \\
          Impossible, since $\LeadsTo{P}{Q}{v}{\tau}{\Valid} \implies \Apply{Q}{v}$.
    \item \ref{rule:i-prod-cex-1}: \\
          We have $\LeadsToT{Q}{Q}{v_1}{\tau_1}{S}{V}$.
          Due to the assumption $\LeadsTo{P}{Q}{\Tuple{v_1}{v_2}}{\TTuple{\tau_1}{\tau_2}}{\Valid}$,
          and \ref{rule:i-prod-valid}, we also have $\LeadsTo{P}{Q}{v_1}{\tau_1}{\Valid}$.
          Then, the claim follows immediately from the induction hypothesis.
    \item \ref{rule:i-prod-cex-2}: \\
          Similar to \ref{rule:i-prod-cex-1} above.
    \item \ref{rule:i-fun-cex}: \\
          We have that
          \begin{enumerate}[leftmargin=1cm,label={\ttfamily(\alph*)}]
            \item $\LeadsTo{Q}{Q}{v_1}{\sigma_1}{\Valid}$, and
            \item $\LeadsToT{Q}{Q}{v_2}{\tau_2}{S'}{V}$.
          \end{enumerate}
          Since $\LeadsTo{P}{Q}{v}{\tau}{\Valid}$, we have two possibilities:
          \begin{itemize}
            \item $\LeadsTo{Q}{P}{v_1}{\sigma_1}{\Valid}$: \\
                  We have $\LeadsTo{P}{Q}{v_2}{\tau_2}{\Valid}$.
                  Then, due to \texttt{(b)} above, we have $\Exists{x \in S'}{\neg\Apply{P}{x}}$ by induction.
            \item $\neg ( \LeadsTo{Q}{P}{v_1}{\sigma_1}{\Valid} )$: \\
                  Due to (the contrapositive of) \cref{lem:collectv-triangle},
                  we have that $\Exists{x \in \CollectV{\sigma_1}{v_1}}{\neg \Apply{P}{x}}$.
          \end{itemize}
          The claim follows since $S = S' \cup \CollectV{\sigma_1}{v_1}$.
  \end{itemize}\vspace*{-1.6em}
\end{proof}

\begin{lemma}\label{lem:ClosedPositives-sound-and-complete}
  If \:\verifier\ is sound and complete, and \Vpos\ and \inv\ satisfy
  $ \Constructible{\Vpos}{\alpha} \wedge \Forall{v \in \Vpos}{\Apply{\inv}{v}} $, then
  \begin{enumerate}
    \item $\Apply{\ClosedPositives}{\Vpos}{\inv} = \CounterExample{V} $ \\
          $ \implies V \neq \emptyset \ \wedge\ \Constructible{V}{\alpha} \ \wedge\ \left(\Forall{v \in V}{\neg\Apply{\inv}{v} \wedge v \not\in \Vpos} \right) $
    \item $\Apply{\ClosedPositives}{\Vpos}{\inv} = \Valid $ \\
          $ \implies \LeadsTo{\Vpos}{\inv}{\mval}{\mtyp}{\Valid} $
  \end{enumerate}
\end{lemma}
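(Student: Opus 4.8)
The plan is to unfold the two function definitions and then reduce everything to the structural lemmas already proved for counterexample derivations. By definition, \IsCondInductive\ $\Vpos$ $\inv$ returns the result $R$ for which $\LeadsTo{\Vpos}{\inv}{\mval}{\mtyp}{R}$ is derivable, and \ClosedPositives\ simply forwards \Valid\ or projects the $V$ component out of a returned $\CounterExample{\Tuple{S}{V}}$. Since \verifier\ is assumed sound and complete, I would first record that its answers faithfully realize the rules of \cref{fig:rules}: it reports \Valid\ exactly when the validity judgment is derivable, and otherwise produces a genuine counterexample derivation $\LeadsToT{\Vpos}{\inv}{\mval}{\mtyp}{S}{V}$. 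Everything downstream rests on this correspondence.

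For part (1), suppose \ClosedPositives\ $\Vpos$ $\inv = \CounterExample{V}$. This can only arise from \IsCondInductive\ $\Vpos$ $\inv = \CounterExample{\Tuple{S}{V}}$, so $\LeadsToT{\Vpos}{\inv}{\mval}{\mtyp}{S}{V}$ holds. Applying \cref{lem:triangle-counterexample} with $P = \Vpos$ and $Q = \inv$ immediately yields $V \neq \emptyset$ and $\Forall{v \in V}{\neg\Apply{\inv}{v}}$. For $\Constructible{V}{\alpha}$ I would invoke \cref{lem:triangle-constructible}; its premises are $\Constructible{\mval}{\mtyp}$, which holds trivially since the polymorphic identity function witnesses that $\mval$ is $\mtyp$-constructible, and $\Constructible{\Vpos}{\alpha}$, which is assumed. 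Finally, $\Forall{v \in V}{v \not\in \Vpos}$ follows by contradiction: each $v \in V$ falsifies $\inv$, whereas the hypothesis $\Forall{w \in \Vpos}{\Apply{\inv}{w}}$ forces every element of $\Vpos$ to satisfy $\inv$. Part (2) is then immediate: \ClosedPositives\ $\Vpos$ $\inv = \Valid$ arises only from \IsCondInductive\ $\Vpos$ $\inv = \Valid$, which by the correspondence means $\LeadsTo{\Vpos}{\inv}{\mval}{\mtyp}{\Valid}$ is derivable --- exactly the desired conclusion.

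The main obstacle is the first step, namely justifying that the verifier-driven implementation of \IsCondInductive\ really decides the judgment $R$ of \cref{fig:rules}. The subtlety lies in the \ref{rule:i-fun-valid} premise, which quantifies universally over all well-typed arguments; discharging this quantified check is precisely the verifier's job. Soundness of \verifier\ guarantees a reported \Valid\ is not spurious, and completeness guarantees that whenever a counterexample derivation exists the verifier surfaces it rather than wrongly returning \Valid. Once this correspondence is established, parts (1) and (2) are short applications of the structural lemmas, so the real content is localized in the verifier interface rather than in the case analysis itself.
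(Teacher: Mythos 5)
Your proof is correct and follows essentially the same route as the paper's: part (2) by unfolding \IsCondInductive{}, and part (1) by combining \cref{lem:triangle-counterexample} (giving $V \neq \emptyset$ and $\Forall{v \in V}{\neg\Apply{\inv}{v}}$), \cref{lem:triangle-constructible} with the trivial constructibility of \mval\ and the assumed $\Constructible{\Vpos}{\alpha}$ (giving $\Constructible{V}{\alpha}$), and the hypothesis $\Forall{v \in \Vpos}{\Apply{\inv}{v}}$ (giving $V \cap \Vpos = \emptyset$). Your closing remark about the verifier faithfully deciding the judgment is a fair observation, but the paper sidesteps it by defining \IsCondInductive\ directly as the relation $R$ with $\LeadsTo{P}{Q}{\mval}{\mtyp}{R}$, so no additional argument is needed there.
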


\begin{proof}
  On expanding \IsCondInductive{}, (2) immediately follows.
  For (1), we first observe that \mval\ is trivially constructible,
  and thus $\Constructible{V}{\alpha}$ by \cref{lem:triangle-constructible}.
  Due to \cref{lem:triangle-counterexample},
  we also have that $V \neq \emptyset \wedge \Forall{x \in V}{\neg \Apply{\inv}{x}}$.
  Finally, we have that $\Forall{x \in V}{v \not\in \Vpos}$
  since $\Forall{v \in \Vpos}{\Apply{\inv}{v}}$ by assumption.
\end{proof}

\begin{lemma}\label{lem:NoNegatives-sound-and-complete}
  If \:\verifier\ is sound and complete, then
  \begin{enumerate}
    \item $\Apply{\NoNegatives}{\inv} = \CounterExample{S} $ \\
          $ \implies \left( S \neq \emptyset \ \ \wedge \ \ \Forall{v \in S}{\Apply{\inv}{v} \wedge \neg\Apply{\spec}{v}} \right) $ \\
          \hspace*{4pt}$ \vee\mkern12mu \left( \Forall{v \in S}{\Apply{\inv}{v}} \ \ \wedge \ \ \Exists{V}{\LeadsToT{\inv}{\inv}{\mval}{\mtyp}{S}{V}} \right)$
    \item $\Apply{\NoNegatives}{\inv} = \Valid $ \\
          $ \implies \LeadsTo{\inv}{\inv}{\mval}{\mtyp}{\Valid} $
  \end{enumerate}
\end{lemma}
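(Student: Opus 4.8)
The plan is to unfold the definition of \NoNegatives\ from \cref{fig:repinvgen-algos} and perform a case analysis on its three exit points, tracking the control flow directly. The procedure first checks sufficiency by calling \verifier\ on $\Sufficient[\spec][\module]{\inv}$ at \cref{line:sufficiency-check}, and only if that returns \Valid\ does it check full inductiveness via \IsCondInductive\ $\inv$ $\inv$ at \cref{line:inductiveness-check}. Consequently there are exactly three outcomes: (a) the sufficiency check fails and \NoNegatives\ returns $\CounterExample{\{v\}}$; (b) sufficiency holds but the inductiveness check produces $\LeadsToT{\inv}{\inv}{\mval}{\mtyp}{S}{V}$ and \NoNegatives\ returns $\CounterExample{S}$; or (c) both checks pass and \NoNegatives\ returns \Valid.

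For part (2), I note that only outcome (c) returns \Valid, and there \IsCondInductive\ $\inv$ $\inv$ returned \Valid. Because \IsCondInductive\ $P$ $Q$ computes the unique $R$ with $\LeadsTo{P}{Q}{\mval}{\mtyp}{R}$ --- faithfully, given that \verifier\ is sound and complete --- the desired conclusion $\LeadsTo{\inv}{\inv}{\mval}{\mtyp}{\Valid}$ is immediate.

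For part (1), I would dispatch the two counterexample outcomes separately. On outcome (a), completeness of \verifier\ guarantees that the returned $v$ falsifies the sufficiency predicate, which I read pointwise as $\lambda x.\,(\Apply{\inv}{x} \implies \Apply{\spec}{x})$; hence $\Apply{\inv}{v} \wedge \neg\Apply{\spec}{v}$, and since $S = \{v\}$ this establishes the first disjunct ($S \neq \emptyset$, with every element satisfying $\inv$ yet violating $\spec$). On outcome (b), the judgment $\LeadsToT{\inv}{\inv}{\mval}{\mtyp}{S}{V}$ returned by \IsCondInductive\ is itself the witness for the existential $\Exists{V}{\LeadsToT{\inv}{\inv}{\mval}{\mtyp}{S}{V}}$ in the second disjunct; instantiating \cref{lem:triangle-counterexample} with $P = Q = \inv$ then yields $\Forall{v \in S}{\Apply{\inv}{v}}$, completing that disjunct. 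Here the asymmetry between the disjuncts is deliberate: the second one imposes no non-emptiness requirement on $S$, since an inductiveness violation can arise from a nullary operation such as \CF{empty}, in which case $S$ is empty and $\Forall{v \in S}{\Apply{\inv}{v}}$ holds vacuously.

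The argument is largely bookkeeping over the control flow, since the substantive facts about counterexample witnesses were already discharged in \cref{lem:triangle-counterexample}. The one point demanding care --- and the closest thing to a hard part --- is the interpretation of the sufficiency argument to \verifier: I must treat $\Sufficient[\spec][\module]{\inv}$ as the predicate $\Apply{\inv}{x} \implies \Apply{\spec}{x}$ over concrete values, so that completeness of \verifier\ delivers precisely $\Apply{\inv}{v} \wedge \neg\Apply{\spec}{v}$ rather than some weaker fact. Notably, it is completeness of \verifier\ (not soundness) that drives outcome (a), whereas soundness enters only indirectly, through the faithfulness of \IsCondInductive. I anticipate no genuine obstacle beyond aligning this notation.
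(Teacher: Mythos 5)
Your proof is correct and takes essentially the same route as the paper's (much terser) proof: part (2) by expanding \NoNegatives{}/\IsCondInductive{}, and part (1) by combining completeness of \verifier\ on the sufficiency predicate of \cref{def:sufficiency} with \cref{lem:triangle-counterexample} instantiated at $P = Q = \inv$. Your extra observations (the asymmetry of the disjuncts because $S$ may be empty under rule \ref{rule:i-abs-cex}, and that it is \verifier's completeness rather than its soundness doing the work) are accurate refinements of the same argument, not a different one.
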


\begin{proof}
  On expanding \IsCondInductive{}, (2) immediately follows.
  (1) follows from \cref{lem:triangle-counterexample} and \cref{def:sufficiency}.
\end{proof}

Finally, we prove a generalization of the soundness and completeness theorem stated in the main body of the paper.

\forcenumber{3}{10}
\begin{theorem}\label{apx.thm:RepInvGen-complete-finite}
  If \:\verifier\ and \:\synthesizer\ are both sound and complete,
  then \RepInvGen\ \Vpos\ \Vneg\ is sound and complete for any finite domain \ctyp{}
  whenever $\Constructible{\Vpos}{\alpha} \ \wedge\ \Vpos \cap \Vneg = \emptyset$.
\end{theorem}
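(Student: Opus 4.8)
The plan is to treat soundness and completeness separately, with completeness carrying essentially all of the work. Soundness is insensitive to the initial \Vpos\ and \Vneg: the algorithm returns a predicate only at \cref{line:repinvgen-result}, which is reached exactly when $\Apply{\NoNegatives}{\inv} = \Valid$, and \cref{lem:NoNegatives-sound} already gives $\Satisfies{\module}{\interface}{\inv}{\spec}$ in that case. So soundness follows exactly as in \cref{apx.thm:RepInvGen-sound}, using neither precondition on \Vpos, \Vneg.

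For completeness, I would assume a sufficient representation invariant $\inv_\star$ exists and argue the recursion terminates without failing. The first step is to show that every recursive call preserves the two invariants $\Constructible{\Vpos}{\alpha}$ and $\Vpos \cap \Vneg = \emptyset$, both of which hold at entry by hypothesis. On the positive branch (\cref{line:positive-recurse}), soundness of \synthesizer\ ensures the candidate \inv\ holds on all of \Vpos, so the precondition of \cref{lem:ClosedPositives-sound-and-complete} is met; part (1) of that lemma then makes the returned counterexample set constructible and disjoint from \Vpos, so the enlarged \Vpos\ stays constructible while \Vneg\ is reset to $\emptyset$. On the negative branch (\cref{line:negative-recurse}), \Vpos\ is unchanged and the added examples $N \setminus \Vpos$ are disjoint from \Vpos\ by construction, so both invariants persist.

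Given these invariants, I would rule out each possible failure. The \synthesizer\ never reports \Failure: since \Vpos\ and \Vneg\ are disjoint finite subsets of the finite domain \ctyp, the predicate that is true exactly on \Vpos\ separates the two sets and is expressible over a finite domain, so completeness of \synthesizer\ forces it to return some predicate. The only other failure is the one guarded by the test at \cref{line:negative-check}, which fires when $N \setminus \Vpos = \emptyset$; this point is reached only after \ClosedPositives\ returned \Valid, i.e. $\LeadsTo{\Vpos}{\inv}{\mval}{\mtyp}{\Valid}$ by part (2) of \cref{lem:ClosedPositives-sound-and-complete}. By \cref{lem:NoNegatives-sound-and-complete}, $N$ comes either from a full-inductiveness counterexample $\LeadsToT{\inv}{\inv}{\mval}{\mtyp}{N}{V}$ or from a sufficiency violation. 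In the former case, instantiating \cref{lem:triangle-valid} with $P = \Vpos$ and $Q = \inv$ (its two hypotheses being visible inductiveness and this counterexample judgment) yields some $x \in N$ with $x \notin \Vpos$, so $N \setminus \Vpos \neq \emptyset$. In the latter case $N \subseteq \Vpos$ would exhibit a constructible value (since $\Constructible{\Vpos}{\alpha}$) that violates \spec, contradicting \cref{thm:exists-repinv-implies-constructible} because $\inv_\star$ exists. Hence neither failure is ever taken.

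Termination then follows from the lexicographic rank $\rank(\Vpos, \Vneg) = \Tuple{\card{\ctyp} - \card{\Vpos}}{\card{\ctyp} - \card{\Vneg}}$: the positive branch strictly shrinks the first component (its added set is nonempty and disjoint from \Vpos), the negative branch fixes \Vpos\ and strictly shrinks the second (since there $N \setminus \Vpos \neq \emptyset$), and the rank is bounded below because \ctyp\ is finite. Since the recursion must halt and cannot fail, it halts at \cref{line:repinvgen-result} with a predicate, establishing completeness. I expect the main obstacle to be ruling out the failure at \cref{line:negative-check}: one cannot simply claim \synthesizer\ reproduces $\inv_\star$, because \Vneg\ may contain constructible values that $\inv_\star$ accepts, so the argument must instead route through visible inductiveness via \cref{lem:triangle-valid} to guarantee that any genuine inductiveness counterexample escapes \Vpos.
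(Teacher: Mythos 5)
Your proof follows the same route as the paper's: soundness via \cref{lem:NoNegatives-sound}; preservation of constructibility of \Vpos\ and disjointness of \Vpos\ and \Vneg\ across both recursive calls; ruling out \synthesizer\ failure via disjointness plus completeness of \synthesizer; ruling out the failure at \cref{line:negative-check} by splitting according to \cref{lem:NoNegatives-sound-and-complete} and using \cref{lem:triangle-valid} in the inductiveness case and the contrapositive of \cref{thm:exists-repinv-implies-constructible} in the sufficiency case; and termination by the same lexicographic rank.

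There is, however, one genuine (if small) gap in your termination step for the negative branch. You claim the second rank component strictly decreases ``since there $N \setminus \Vpos \neq \emptyset$,'' but that alone does not give $\card{\Vneg \cup (N \setminus \Vpos)} > \card{\Vneg}$: the elements of $N \setminus \Vpos$ could all already lie in \Vneg, in which case the recursive call at \cref{line:negative-recurse} has exactly the same arguments as the current one, and termination cannot be concluded. The missing observation, which the paper makes explicitly via its Equation (1) (soundness of the synthesizer applied to the current call), is that $N \cap \Vneg = \emptyset$: by \cref{lem:NoNegatives-sound-and-complete}, every $v \in N$ satisfies the current candidate \inv\ (in both the sufficiency and the inductiveness cases), while by soundness of \synthesizer\ every $v \in \Vneg$ falsifies \inv. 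Both facts are already cited in your proof for other purposes, so the repair is one line, but without it the strict decrease of the rank --- and hence termination --- is unjustified as written.
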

\resetnumber

\begin{proof}[Proof]
  Soundness follows from \cref{apx.thm:RepInvGen-sound}.
  Using a lexicographic ranking argument,
  we now show that \RepInvGen\ generates a predicate in finite steps.

  Consider the ranking function:
  $$ \rank(\Vpos, \inv) \definedas \Tuple{\card{\ctyp} - \card{\Vpos}}{\card{\ctyp} - \card{\Vneg}} $$
  It is lower bounded by $\Tuple{0}{0}$,
  and in the remainder of the proof, we show that it decreases lexicographically
  with each recursive \RepInvGen\ call.

  First, we note that since $\Vpos \cap \Vneg = \emptyset$ and \synthesizer\ is complete,
  a candidate invariant \inv\ will always be obtained.
  Moreover, since \synthesizer\ is sound, we also have
  \begin{equation}\label{eqn:synth-sound-result}
    \Forall{v \in \Vpos}{\Apply{\inv}{v}}
    \quad\wedge\quad
    \Forall{v \in \Vneg}{\neg\Apply{\inv}{v}}
  \end{equation}

  We have two cases for the \ClosedPositives\ call at \cref{line:positive-check}:
  \begin{enumerate}
    \item $\Apply{\ClosedPositives}{\Vpos}{\inv} = \CounterExample{P}$.

          Due to \cref{lem:ClosedPositives-sound-and-complete},
          we have $\Constructible{P}{\alpha} $.
          Since we reset $\Vneg = \emptyset$ for the recursive call at \cref{line:positive-recurse},
          we have that
          $$ \Constructible{\Vpos \cup P}{\alpha} \quad\wedge\quad (\Vpos \cup P) \cap \emptyset = \emptyset $$

          Since $P \neq \emptyset \wedge \Forall{v \in P}{v \not\in \Vpos}$,
          also due to \cref{lem:ClosedPositives-sound-and-complete},
          we have that $\card{\Vpos \cup P} > \card{\Vpos}$.
          Thus, the first index of \rank\ decreases at the recursive call.

    \item $\Apply{\ClosedPositives}{\Vpos}{\inv} = \Valid$.

          We have two cases for the \NoNegatives\ call at \cref{line:negative-check}:
          \begin{enumerate}
            \item $\Apply{\NoNegatives}{\inv} = \CounterExample{N} $.

                  For the recursive call at \cref{line:negative-recurse}, \Vpos\ remains unchanged
                  and clearly $\Vpos \cap (\Vneg \cup (N \setminus \Vpos)) = \emptyset$ since $\Vpos \cap \Vneg = \emptyset$. \\[0.5em]
                  First we note that $\LeadsTo{\Vpos}{\inv}{\mval}{\mtyp}{\Valid}$,
                  due to \cref{lem:ClosedPositives-sound-and-complete}.
                  Applying $\vee$-elimination to case (1) of \cref{lem:NoNegatives-sound-and-complete},
                  we have the following two cases,
                  and we show that $N \neq \emptyset \wedge \Forall{v \in N}{\Apply{\inv}{v}} \wedge (N \setminus \Vpos) \neq \emptyset$ holds for both.
                  \begin{enumerate}
                    \item $N \neq \emptyset \ \ \wedge \ \ \Forall{v \in N}{\Apply{\inv}{v} \wedge \neg\Apply{\spec\,[\ctyp]}{\mval}{v}}$. \\[0.5em]
                          Due to (the contrapositive of) \cref{thm:exists-repinv-implies-constructible},
                          we have that $\Forall{v \in N}{\neg\Constructible{v}{\alpha}}$
                          and so $(N \setminus \Vpos) \neq \emptyset$.

                    \item $\Forall{v \in N}{\Apply{\inv}{v}} \ \ \wedge \ \ \Exists{V}{\LeadsToT{\inv}{\inv}{\mval}{\mtyp}{N}{V}}$.

                          Due to \cref{lem:triangle-valid}, $\Exists{v \in N}{v \not\in \Vpos}$.
                          Thus, we have that $N \neq \emptyset \wedge (N \setminus \Vpos) \neq \emptyset$.

                  \end{enumerate}
                  Due to \cref{eqn:synth-sound-result}, $\Forall{v \in N}{v \not\in \Vneg}$.
                  Thus, in both cases we have that
                  $\card{\Vneg \cup (N \setminus \Vpos)} > \card{\Vneg}$,
                  and the second index of \rank\ decreases at the recursive call.

            \item $\Apply{\NoNegatives}{\inv} = \Valid $.

                  We have a solution.
          \end{enumerate}
  \end{enumerate}\vspace*{-1.6em}
\end{proof}

\clearpage
\fi

\end{document}